\def\today{\ifcase \month \or
   January \or February \or March \or April \or
   May \or June \or July \or August \or
   September \or October \or November \or December \fi
   \space\number\day , \number\year}
  \newcommand\@dotsep{4.5}
  \def\@tocline#1#2#3#4#5#6#7{\relax
     \ifnum #1>\c@tocdepth 
     \else
     \par \addpenalty\@secpenalty\addvspace{#2}%
     \begingroup \hyphenpenalty\@M
     \@ifempty{#4}{%
     \@tempdima\csname r@tocindent\number#1\endcsname\relax
        }{%
         \@tempdima#4\relax
           }%
      \parindent\z@ \leftskip#3\relax \advance\leftskip\@tempdima\relax
      \rightskip\@pnumwidth plus1em \parfillskip-\@pnumwidth
       #5\leavevmode\hskip-\@tempdima #6\relax
       \leaders\hbox{$\m@th
       \mkern \@dotsep mu\hbox{.}\mkern \@dotsep mu$}\hfill
       \hbox to\@pnumwidth{\@tocpagenum{#7}}\par
       \nobreak
        \endgroup
         \fi}
\begin{document}


\makeatletter
\@addtoreset{figure}{section}
\def\thefigure{\thesection.\@arabic\c@figure}
\def\fps@figure{h,t}
\@addtoreset{table}{bsection}

\def\thetable{\thesection.\@arabic\c@table}
\def\fps@table{h, t}
\@addtoreset{equation}{section}
\def\theequation{
\arabic{equation}}
\makeatother

\newcommand{\bfi}{\bfseries\itshape}

\newtheorem{theorem}{Theorem}
\newtheorem{acknowledgment}[theorem]{Acknowledgment}
\newtheorem{corollary}[theorem]{Corollary}
\newtheorem{definition}[theorem]{Definition}
\newtheorem{example}[theorem]{Example}
\newtheorem{lemma}[theorem]{Lemma}
\newtheorem{notation}[theorem]{Notation}
\newtheorem{proposition}[theorem]{Proposition}
\newtheorem{remark}[theorem]{Remark}
\newtheorem{setting}[theorem]{Setting}

\numberwithin{theorem}{section}
\numberwithin{equation}{section}

\newcommand{\1}{{\bf 1}}
\newcommand{\Ad}{{\rm Ad}}
\newcommand{\Alg}{{\rm Alg}\,}
\newcommand{\Aut}{{\rm Aut}\,}
\newcommand{\ad}{{\rm ad}}
\newcommand{\Borel}{{\rm Borel}}
\newcommand{\Ci}{{\mathcal C}^\infty}
\newcommand{\Cpol}{{\mathcal C}^\infty_{\rm pol}}
\newcommand{\Der}{{\rm Der}\,}
\newcommand{\de}{{\rm d}}
\newcommand{\ee}{{\rm e}}
\newcommand{\End}{{\rm End}\,}
\newcommand{\ev}{{\rm ev}}
\newcommand{\id}{{\rm id}}
\newcommand{\ie}{{\rm i}}
\newcommand{\GL}{{\rm GL}}
\newcommand{\gl}{{{\mathfrak g}{\mathfrak l}}}
\newcommand{\Hom}{{\rm Hom}\,}
\newcommand{\Img}{{\rm Im}\,}
\newcommand{\Ind}{{\rm Ind}}
\newcommand{\Ker}{{\rm Ker}\,}
\newcommand{\Lie}{\text{\bf L}}
\newcommand{\m}{\text{\bf m}}
\newcommand{\pr}{{\rm pr}}
\newcommand{\Ran}{{\rm Ran}\,}
\renewcommand{\Re}{{\rm Re}\,}
\newcommand{\so}{\text{so}}
\newcommand{\spa}{{\rm span}\,}
\newcommand{\Tr}{{\rm Tr}\,}
\newcommand{\Op}{{\rm Op}}
\newcommand{\U}{{\rm U}}

\newcommand{\CC}{{\mathbb C}}
\newcommand{\RR}{{\mathbb R}}
\newcommand{\TT}{{\mathbb T}}

\newcommand{\Ac}{{\mathcal A}}
\newcommand{\Bc}{{\mathcal B}}
\newcommand{\Cc}{{\mathcal C}}
\newcommand{\Dc}{{\mathcal D}}
\newcommand{\Ec}{{\mathcal E}}
\newcommand{\Fc}{{\mathcal F}}
\newcommand{\Hc}{{\mathcal H}}
\newcommand{\Jc}{{\mathcal J}}
\renewcommand{\Mc}{{\mathcal M}}
\newcommand{\Nc}{{\mathcal N}}
\newcommand{\Oc}{{\mathcal O}}
\newcommand{\Pc}{{\mathcal P}}
\newcommand{\Rc}{{\mathcal R}}
\newcommand{\Sc}{{\mathcal S}}
\newcommand{\Tc}{{\mathcal T}}
\newcommand{\Vc}{{\mathcal V}}
\newcommand{\Uc}{{\mathcal U}}
\newcommand{\Yc}{{\mathcal Y}}
\newcommand{\Wig}{{\mathcal W}}

\newcommand{\Bg}{{\mathfrak B}}
\newcommand{\Fg}{{\mathfrak F}}
\newcommand{\Gg}{{\mathfrak G}}
\newcommand{\Ig}{{\mathfrak I}}
\newcommand{\Jg}{{\mathfrak J}}
\newcommand{\Lg}{{\mathfrak L}}
\newcommand{\Pg}{{\mathfrak P}}
\newcommand{\Sg}{{\mathfrak S}}
\newcommand{\Xg}{{\mathfrak X}}
\newcommand{\Yg}{{\mathfrak Y}}
\newcommand{\Zg}{{\mathfrak Z}}

\newcommand{\ag}{{\mathfrak a}}
\newcommand{\bg}{{\mathfrak b}}
\newcommand{\dg}{{\mathfrak d}}
\renewcommand{\gg}{{\mathfrak g}}
\newcommand{\hg}{{\mathfrak h}}
\newcommand{\kg}{{\mathfrak k}}
\newcommand{\mg}{{\mathfrak m}}
\newcommand{\n}{{\mathfrak n}}
\newcommand{\og}{{\mathfrak o}}
\newcommand{\pg}{{\mathfrak p}}
\newcommand{\sg}{{\mathfrak s}}
\newcommand{\tg}{{\mathfrak t}}
\newcommand{\ug}{{\mathfrak u}}
\newcommand{\zg}{{\mathfrak z}}

\newcommand{\ZZ}{\mathbb Z}
\newcommand{\NN}{\mathbb N}
\newcommand{\BB}{\mathbb B}

\newcommand{\ep}{\varepsilon}

\newcommand{\hake}[1]{\langle #1 \rangle }

\newcommand{\scalar}[2]{\langle #1 ,#2 \rangle }
\newcommand{\vect}[2]{(#1_1 ,\ldots ,#1_{#2})}
\newcommand{\norm}[1]{\Vert #1 \Vert }
\newcommand{\normrum}[2]{{\norm {#1}}_{#2}}

\newcommand{\upp}[1]{^{(#1)}}
\newcommand{\p}{\partial}

\newcommand{\opn}{\operatorname}
\newcommand{\slim}{\operatornamewithlimits{s-lim\,}}
\newcommand{\sgn}{\operatorname{sgn}}

\newcommand{\seq}[2]{#1_1 ,\dots ,#1_{#2} }
\newcommand{\loc}{_{\opn{loc}}}

\makeatletter
\title[Uncertainty principles on coadjoint orbits]{Uncertainty principles for magnetic structures on 
certain coadjoint orbits}
\author{Ingrid Belti\c t\u a 
and Daniel Belti\c t\u a
}
\address{Institute of Mathematics ``Simion Stoilow'' 
of the Romanian Academy, 
P.O. Box 1-764, Bucharest, Romania}
\email{Ingrid.Beltita@imar.ro}
\email{Daniel.Beltita@imar.ro}
\keywords{Weyl calculus; magnetic field; 
Lie group; semidirect product}
\subjclass[2000]{Primary 81S30; Secondary 22E25, 22E65, 35S05, 47G30}
\date{June 8, 2009}
\makeatother

\begin{abstract} 
By building on our earlier work, 
we establish uncertainty principles in terms of Heisenberg inequalities 
and of the ambiguity functions 
associated with magnetic structures on 
certain coadjoint orbits of infinite-dim\-ensional Lie groups. 
These infinite-dimensional Lie groups are semidirect products 
of nilpotent Lie groups and invariant function spaces thereon. 
The recently developed magnetic Weyl calculus is recovered in the special case 
of function spaces on abelian Lie groups. 
\end{abstract}

\maketitle


\section{Introduction}

The relationship between the Weyl calculus of pseudo-differential operators on~$\RR^n$ 
and the Heisenberg group $\RR^{n+1}\rtimes\RR^n$ is a classical topic 
(see for instance \cite{Pe94}, \cite{Gr01}, or \cite{dG06}). 
In fact, the Weyl calculus provides a quantization of 
a nontrivial coadjoint orbit for the Heisenberg group. 
On the other hand, a magnetic gauge-invariant pseudo-differential calculus on $\RR^n$ 
has also been recently developed by using techniques of hard analysis; 
see \cite{MP04} and \cite{IMP07}. 
As our alternative approach has shown (\cite{BB09}), 
this magnetic calculus can be set up for any nilpotent Lie group~$G$ 
and can be understood as a quantization 
of a certain coadjoint orbit for some Lie group~$\Fc\rtimes G$, 
which is infinite dimensional unless 
the magnetic field is polynomial. 
More specifically, by adapting ideas of \cite{Ba98}, 
the cotangent bundle $T^*G$ 
has been symplectomorphically realized as a coadjoint orbit  
of $\Fc\rtimes G$ and the pseudo-differential calculus has been 
constructed as a Weyl quantization of that orbit. 
In our case, the semidirect product is needed in order to deal 
with rather general perturbations of invariant differential 
operators on~$G$. 
The semidirect products have also turned out to be an important 
tool in mechanics; see for instance \cite{HMR98}.

In the present paper we investigate some uncertainty principles 
for the magnetic Weyl calculus developed in \cite{BB09}. 
The uncertainty principles have been an active area of research.   
We refer to the survey \cite{FS97} for a comprehensive introduction 
to this circle of ideas, to \cite{HN88} for the case of families of pseudo-differential operators, and to \cite{Th04} and \cite{BK08} for 
Hardy's uncertainty principles on Lie groups.  
The main point of the present approach is 
that the aforementioned Weyl quantization
allows us to obtain versions of Heisenberg's inequality 
---taking into account magnetic momenta--- 
and Lieb's uncertainty principle (\cite{Li90}) for a certain wavelet transform 
associated with the coadjoint orbit $T^*G$ of~$\Fc\rtimes G$. 

Let us describe the contents of our paper in some more detail. 
Section~\ref{sect2} is devoted to establishing 
Heisenberg's uncertainty inequality in the magnetic setting 
on nilpotent Lie groups. 
In subsection~\ref{subsect2.1}, after describing 
the necessary notation used throughout the paper,  
we introduce the ambiguity function and the cross-Wigner distribution 
in the present framework and prove some of their main properties 
including Moyal's identity (Theorem~\ref{o6}). 
Preliminary material on magnetic Weyl calculus from~\cite{BB09} 
is provided in subsection~\ref{subsect2.1} 
along with additional properties in terms of the Wigner distribution. 
Thus, in Proposition~\ref{marginal} 
we indicate the significance of its marginal distributions for
the functional calculus with both the position operators 
and the ``noncommutative magnetic momentum'' operators. 
Let us point out that using the usual Fourier transform 
does not seem very natural in the present context. 
This is due both to the presence of the magnetic potential 
and to the fact that the invariant vector fields  
on a nilpotent Lie group may not have constant coefficients 
(see Example~\ref{part}). 
Versions for Heisenberg's inequality are established 
in Theorem~\ref{ineq_th} and Corollary~\ref{ineq_cor}. 

Section~\ref{sect3} deals with a version of Lieb's uncertainty principle 
in the present setting. 
The main result is Theorem~\ref{lieb} and is 
stated in terms of magnetic ambiguity functions and mixed-norm Lebesgue spaces 
on the cotangent bundle of a nilpotent Lie group. 
In the case of abelian Lie groups and no magnetic potential 
we recover one of the results of \cite{BDO07}. 
(See also \cite{BDJ03} and \cite{De05} for related results 
in this classical case.) 
Among the consequences of Theorem~\ref{lieb} 
we mention an embedding theorem for the natural versions 
of the modulation spaces in our setting (Corollary~\ref{lieb_cor1}). 

Finally, in Section~\ref{sect4} we illustrate the main ideas 
by considering the special case of two-step nilpotent Lie algebras.

\section{Heisenberg's uncertainty inequality in the magnetic setting on nilpotent Lie groups}\label{sect2}

\subsection{Moyal's identity on nilpotent Lie groups}\label{subsect2.1}
In this subsection we introduce the ambiguity function and the cross-Wigner distribution 
in the present setting and prove some of their main properties 
including Moyal's identity (Theorem~\ref{o6}). 
This property occurs in connection with a finite-dimensional coadjoint orbit 
of a semidirect product which is in general an infinite-dimensional Lie group 
(see Prop.~2.9 in \cite{BB09}). 
It corresponds to the orthogonality relations proved in \cite{Pe94} 
for the matrix coefficients 
of \emph{any} irreducible representation of a nilpotent Lie group. 
Let us also note that 
wavelet transforms associated with semidirect products of locally compact 
(or finite-dimensional Lie) groups appeared in \cite{KT03} and \cite{Fue09}.

\begin{setting}\label{o1}
\normalfont
We shall work in the setting of Section~4 in \cite{BB09}. 
Let us briefly recall the main notation involved therein. 
\begin{itemize}
\item A connected, simply connected, nilpotent Lie group $G$ is 
identified to its Lie algebra $\gg$ by means of the exponential map. 
We denote by $\ast$ the Baker-Campbell-Hausdorff multiplication 
on $\gg$, so that $G=(\gg,\ast)$. 
\item 
The cotangent bundle $T^*G$ is a trivial bundle and 
we perform the identification 
\begin{equation}\label{cot}
T^*G\simeq\gg\times\gg^*
\end{equation}
by using the trivialization by left translations. 
\item $\Fc$ is an admissible function space on the Lie group $G$ 
(see  Def.~2.8 in \cite{BB09});  
in particular, $\Fc$ is invariant under translations to the left on $G$ 
and is endowed with a locally convex topology 
such that we have continuous inclusions $\gg^*\hookrightarrow \Fc\hookrightarrow\Ci(G)$. 

For instance $\Fc$ can be the whole space $\Ci(G)$ or the space $\Cpol(G)$ 
of smooth functions with polynomial growth. 
See however Example~\ref{finite-dim} below for specific situations when $\dim\Fc<\infty$.
\item The semidirect product $M=\Fc\rtimes_\lambda G$ is an infinite-dimensional Lie group 
in general, whose Lie algebra is $\mg=\Fc\rtimes_{\dot\lambda}\gg$. 
We refer to \cite{Ne06} or \cite{Be06} for basic facts on infinite-dimensional Lie groups. 
\item We endow $\gg$ and its dual space $\gg^*$ with Lebesgue measures 
suitably normalized such that the Fourier transform $L^2(\gg)\to L^2(\gg)$ 
is a unitary operator, and we denote $\Hc=L^2(\gg)$. 
\item We define a unitary representation $\pi\colon M\to\Bc(\Hc)$ by 
$$(\pi(\phi,X)f)(Y)=\ee^{\ie\phi(Y)}f((-X)\ast Y)$$
for $(\phi,X)\in M$, $f\in\Hc$, and $Y\in\gg$. 
\item The magnetic potential is a smooth mapping $A\colon\gg\to\gg^*$, $X\mapsto A_X$, with polynomial growth 
such that for every $X\in\gg$ we have $\langle A_\bullet,(R_\bullet)'_0X\rangle\in\Fc$. 
\item We also need the mappings 
$$\theta_0\colon\gg\times\gg^*\to\Fc, \quad 
\theta_0(X,\xi)=\xi+\langle A_\bullet,(R_\bullet)'_0X\rangle$$
and 
$\theta\colon\gg\times\gg^*\to\mg$, $(X,\xi)\mapsto(\theta_0(X,\xi),X)$. 
Here $R_Y\colon\gg\to\gg$, $Z\mapsto Z\ast Y$, is the translation to the right 
defined by any $Y\in\gg$. 
\end{itemize}
\qed
\end{setting}

\begin{remark}\label{o2}
\normalfont 
For every $(X,\xi)\in\gg\times\gg^*$ we have 
$$(\pi(\exp_M(\theta(X,\xi)))f)(Y)=\exp\Bigl(\ie\int\limits_0^1
\theta_0(X,\xi)((-sX)\ast Y)\de s\Bigr)f((-X)\ast Y)$$
whenever $f\in L^2(\gg)$ and $Y\in\gg$. 
See eq.~(4.8) in \cite{BB09}. 
\qed
\end{remark}

\begin{notation}\label{o3}
\normalfont
We shall denote for every $X\in\gg$, 
$$\Psi_X\colon\gg\to\gg, \quad \Psi_X(Y)=\int\limits_0^1Y\ast(sX)\de s$$
(see Prop.~3.2 in \cite{BB09}) and also 
$$\tau_A(X,Y)=\exp\Bigl(\ie\int\limits_0^1\langle A_{(-sX)\ast Y},(R_{(-sX)\ast Y})'_0X\rangle\de s\Bigr)$$
for $X,Y\in\gg$. 
\qed
\end{notation}

\begin{lemma}\label{o4}
For every $(X,\xi)\in\gg\times\gg^*$ and $f\in L^2(\gg)$ 
we have 
\begin{equation}\label{o4_item1} 
(\pi(\exp_M(\theta(X,\xi)))f)(Y)
=\tau_A(X,Y)\ee^{-\ie\langle\xi,\Psi_X(-Y)\rangle}f((-X)\ast Y)
\end{equation} 
and 
\begin{equation}\label{o4_item2} 
(\pi(\exp_M(-\theta(X,\xi)))f)(Y)
=\tau_A(X,X\ast Y)^{-1}\ee^{\ie\langle\xi,\Psi_X(-(X\ast Y))\rangle}f(X\ast Y)
\end{equation}
for arbitrary $Y\in\gg$. 
\end{lemma}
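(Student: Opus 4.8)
The plan is to derive both formulas from the explicit expression recorded in Remark~\ref{o2}, treating \eqref{o4_item1} as the substantive computation and obtaining \eqref{o4_item2} by inverting the resulting operator.

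First I would prove \eqref{o4_item1}. Starting from Remark~\ref{o2} and the definition of $\theta_0$ in Setting~\ref{o1}, the exponent splits into two pieces,
$$\ie\int_0^1\theta_0(X,\xi)((-sX)\ast Y)\de s
=\ie\int_0^1\langle\xi,(-sX)\ast Y\rangle\de s
+\ie\int_0^1\langle A_{(-sX)\ast Y},(R_{(-sX)\ast Y})'_0X\rangle\de s.$$
The second summand, once exponentiated, is exactly $\tau_A(X,Y)$ from Notation~\ref{o3}, so it contributes the prefactor $\tau_A(X,Y)$ with no further work. For the first summand I would use that $\xi\in\gg^*$ is linear to pull it through the integral, together with the key algebraic identity in the group $G=(\gg,\ast)$ that inversion is negation and reverses products, so that $-\bigl((-Y)\ast(sX)\bigr)=(-sX)\ast Y$ for every $s$. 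Recalling $\Psi_X(Y)=\int_0^1 Y\ast(sX)\de s$, this yields
$$\int_0^1\langle\xi,(-sX)\ast Y\rangle\de s
=\Bigl\langle\xi,\int_0^1-\bigl((-Y)\ast(sX)\bigr)\de s\Bigr\rangle
=-\langle\xi,\Psi_X(-Y)\rangle,$$
which is precisely the phase $\ee^{-\ie\langle\xi,\Psi_X(-Y)\rangle}$ in \eqref{o4_item1}. Combining the two contributions with the translated argument $f((-X)\ast Y)$ gives \eqref{o4_item1}.

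For \eqref{o4_item2} I would exploit that $\exp_M(-\theta(X,\xi))=\exp_M(\theta(X,\xi))^{-1}$ and that $\pi$ is a group homomorphism, so the operator in \eqref{o4_item2} is the inverse of the one in \eqref{o4_item1}. Concretely, writing $g=\pi(\exp_M(\theta(X,\xi)))f$ and solving \eqref{o4_item1} for $f$ by the substitution $Y\mapsto X\ast Y$, one uses $(-X)\ast(X\ast Y)=Y$ to reduce the translated argument to $f(Y)$; inverting the unimodular prefactor and the phase then produces exactly the right-hand side of \eqref{o4_item2}.

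The main obstacle is the sign and inversion bookkeeping in the BCH group: one must apply $-(a\ast b)=(-b)\ast(-a)$ under the integral sign, keep track of the two occurrences of $-Y$ versus $-(X\ast Y)$, and confirm that the cocycle $\tau_A$, evaluated at the shifted point $X\ast Y$ that the substitution naturally produces, yields the reciprocal factor $\tau_A(X,X\ast Y)^{-1}$ rather than $\tau_A(X,Y)^{-1}$. Everything else is a routine substitution and use of the linearity of $\xi$.
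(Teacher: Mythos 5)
Your proposal is correct and follows essentially the same route as the paper: formula \eqref{o4_item1} is obtained by expanding the exponent from Remark~\ref{o2} via the definition of $\theta_0$ and Notation~\ref{o3} (your identity $-\bigl((-Y)\ast(sX)\bigr)=(-sX)\ast Y$ together with linearity of $\xi$ is exactly the "follows at once" step the paper leaves implicit), and \eqref{o4_item2} is derived, just as in the paper, by solving \eqref{o4_item1} for $f$ and substituting $Y\mapsto X\ast Y$, which correctly yields the factor $\tau_A(X,X\ast Y)^{-1}$.
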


\begin{proof}
Formula~\eqref{o4_item1} follows at once by Remark~\ref{o2} and Notation~\ref{o3}. 
In order to prove the second formula, note that for $\phi\in L^2(\gg)$ 
we have by~\eqref{o4_item1}
$$\begin{aligned}
(\pi(\exp_M(\theta(X,&\xi)))f)(Y)=\phi(Y) \\
&\iff \tau_A(X,Y)\exp(-\ie\langle\xi,\Psi_X(-Y)\rangle)f((-X)\ast Y)=\phi(Y) \\
&\iff f((-X)\ast Y)=\tau_A(X,Y)^{-1}\exp(\ie\langle\xi,\Psi_X(-Y)\rangle)\phi(Y) 
\end{aligned}$$
for arbitrary $Y\in\gg$, which is further equivalent to 
$$(\forall Y\in\gg)\quad f(Y)
=\tau_A(X,X\ast Y)^{-1}\exp(\ie\langle\xi,\Psi_X(-(X\ast Y))\rangle)\phi(X\ast Y) $$
and this concludes the proof. 
\end{proof}

\begin{definition}\label{o5}
\normalfont
For arbitrary $\phi,f\in L^2(\gg)$ we define the function 
$${\Ac}_\phi f\colon\gg\times\gg^*\to\CC,\quad 
({\Ac}_\phi f)(X,\xi)=(f\mid\pi(\exp_M(\theta(X,\xi)))\phi). $$
We shall call $\Ac_\phi f$ 
the \emph{ambiguity function} 
defined by $\phi,f\in L^2(\gg)$. 
By using the canonical symplectic structure on $\gg\times\gg^*$ given by 
$$(\gg\times\gg^*)\times(\gg\times\gg^*)\to\RR, 
\quad ((X_1,\xi_1),(X_2,\xi_2))\mapsto\hake{\xi_1,X_2}-\hake{\xi_2,X_1} $$
we also define the symplectic Fourier transform of the ambiguity function 
$$\Wig(f,\phi):=\widehat{\Ac_\phi f}\in L^2(\gg\times\gg^*)$$ 
and we call it the \emph{cross-Wigner distribution (function)} of $\phi,f\in L^2(\gg)$. 
The definition of $\Wig(f,\phi)$ makes sense since 
it follows by Theorem~\ref{o6} below that $\Ac_\phi f\in L^2(\gg)$. 
\qed
\end{definition}

\begin{remark}\label{ext_amb}
\normalfont
Let $\phi\in\Sc(\gg)$. 
Formula~\eqref{o4_item1} shows that 
for every $(X,\xi)\in\gg\times\gg^*$ we have $\pi(\exp_M(\theta(X,\xi)))\phi\in\Sc(\gg)$. 
Moreover, the mapping 
$$\gg\times\gg^*\to\Sc(\gg),\quad (X,\xi)\mapsto\pi(\exp_M(\theta(X,\xi)))\phi$$
is continuous. 
Thus we can extend the definition of $\Ac_\phi f$ for every $f\in\Sc'(\gg)$ 
to obtain the continuous function 
$$\Ac_\phi f\colon\gg\times\gg^*\to\CC,\quad 
(\Ac_\phi f)(X,\xi)=\hake{f,\overline{\pi(\exp_M(\theta(X,\xi)))\phi}},$$ 
where $\hake{\cdot,\cdot}\colon\Sc'(\gg)\times\Sc(\gg)\to\CC$ 
is the usual duality pairing. 

We also note that if $f,\phi\in\Sc(\gg)$, then $\Ac_\phi f\in\Sc(\gg\times\gg^*)$ as 
an easy consequence of Lemma~\ref{o4}. 
\qed
\end{remark}

The second equality in Theorem~\ref{o6}\eqref{o6_item1} below 
will be referred to as \emph{Moyal's identity} just as 
in the classical situation when the Lie algebra $\gg$ is abelian 
(see for instance \cite{Gr01}).

\begin{theorem}\label{o6}
The following assertions hold: 
\begin{enumerate}
\item\label{o6_item1} 
For every $\phi,f\in L^2(\gg)$ we have ${\Ac}_\phi f\in L^2(\gg\times\gg^*)$ 
and 
\begin{equation*}
\begin{aligned}
({\Ac}_{\phi_1}f_1\mid {\Ac}_{\phi_2}f_2)_{L^2(\gg\times\gg^*)} 
&=(f_1\mid f_2)_{L^2(\gg)}\cdot(\phi_2\mid\phi_1)_{L^2(\gg)} \\
&=({\Wig}(f_1,\phi_1)\mid {\Wig}(f_2,\phi_2))_{L^2(\gg\times\gg^*)} 
\end{aligned}
\end{equation*}
whenever $\phi_1,f_1,\phi_2,f_2\in L^2(\gg)$. 
\item\label{o6_item2} 
If $\phi_0\in L^2(\gg)$ with $\Vert\phi_0\Vert=1$, 
then the operator ${\Ac}_{\phi_0}\colon L^2(\gg)\to L^2(\gg\times\gg^*)$, $f\mapsto {\Ac}_{\phi_0} f$, 
is an isometry and we have 
\begin{equation*}
\iint\limits_{\gg\times\gg^*}({\Ac}_{\phi_0}f)(X,\xi)\cdot\pi(\exp_M(\theta(X,\xi)))\phi\,\de(X,\xi)
=(\phi\mid\phi_0)f
\end{equation*}
for every $\phi,f\in L^2(\gg)$. 
In particular, 
\begin{equation*}
\iint\limits_{\gg\times\gg^*}({\Ac}_{\phi_0} f)(X,\xi)\cdot\pi(\exp_M(\theta(X,\xi)))\phi_0\,\de(X,\xi)=f
\end{equation*}
for arbitrary $f\in L^2(\gg)$. 
\end{enumerate}
\end{theorem}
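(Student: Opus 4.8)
The plan is to reduce both assertions to the explicit integral formula for the ambiguity function supplied by Lemma~\ref{o4}, to verify part~\eqref{o6_item1} first for Schwartz data by a Plancherel computation, and then to pass to general $L^2$ data by density; part~\eqref{o6_item2} will then be a formal consequence of part~\eqref{o6_item1}. Writing the inner product as $(f\mid g)=\int_\gg f\overline g$ and inserting \eqref{o4_item1}, I obtain
$$(\Ac_\phi f)(X,\xi)=\int_\gg f(Y)\,\overline{\tau_A(X,Y)}\,\ee^{\ie\scalar{\xi}{\Psi_X(-Y)}}\,\overline{\phi((-X)\ast Y)}\,\de Y,$$
so that for fixed $X$ the function $\xi\mapsto(\Ac_\phi f)(X,\xi)$ is the Fourier transform in $\xi$ of the pushforward, under $Y\mapsto\Psi_X(-Y)$, of $g^X(Y):=f(Y)\overline{\tau_A(X,Y)}\,\overline{\phi((-X)\ast Y)}$. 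For $f,\phi\in\Sc(\gg)$ Remark~\ref{ext_amb} gives $\Ac_\phi f\in\Sc(\gg\times\gg^*)$, so the integrals below converge absolutely and Fubini's and Plancherel's theorems apply.

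To prove the first equality in \eqref{o6_item1} I would integrate in $\xi$ first. The map $Y\mapsto\Psi_X(-Y)$ is a polynomial diffeomorphism of $\gg$ whose differential, with respect to a basis adapted to the lower central series of $\gg$, is unipotent and therefore of Jacobian $1$ (cf.\ Prop.~3.2 in \cite{BB09}); hence Plancherel's identity in $\xi$ gives
$$\int_{\gg^*}(\Ac_{\phi_1}f_1)(X,\xi)\,\overline{(\Ac_{\phi_2}f_2)(X,\xi)}\,\de\xi=\int_\gg g_1^X(Y)\,\overline{g_2^X(Y)}\,\de Y.$$
Since $|\tau_A(X,Y)|=1$, the magnetic phase factors cancel in the product $g_1^X\overline{g_2^X}$, leaving $f_1\overline{f_2}\cdot\overline{\phi_1((-X)\ast Y)}\,\phi_2((-X)\ast Y)$. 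Integrating in $X$ and substituting, for each fixed $Y$, $W=(-X)\ast Y$ --- a composition of $X\mapsto -X$ with the right translation $R_Y$, both measure preserving on the unimodular group $G$ --- the variables separate and yield
$$\iint\limits_{\gg\times\gg^*}(\Ac_{\phi_1}f_1)\,\overline{(\Ac_{\phi_2}f_2)}\,\de(X,\xi)=\Bigl(\int_\gg f_1\overline{f_2}\,\de Y\Bigr)\Bigl(\int_\gg\overline{\phi_1}\phi_2\,\de W\Bigr)=(f_1\mid f_2)\,(\phi_2\mid\phi_1).$$

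I expect the only real obstacles to be the two volume-preservation facts --- the unipotence of the differential of $\Psi_X$ and the unimodularity of $G$ --- but both are standard for connected, simply connected nilpotent groups and are already available from \cite{BB09}. Specializing to $f_1=f_2=f$, $\phi_1=\phi_2=\phi$ gives $\Vert\Ac_\phi f\Vert_{L^2}=\Vert f\Vert\,\Vert\phi\Vert$ for Schwartz $f,\phi$; hence $\Ac_\phi$ extends to a bounded operator $L^2(\gg)\to L^2(\gg\times\gg^*)$, and because $(X,\xi)\mapsto\pi(\exp_M(\theta(X,\xi)))\phi$ is continuous the pointwise definition of $\Ac_\phi f$ coincides with this extension. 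Thus $\Ac_\phi f\in L^2(\gg\times\gg^*)$ for all $\phi,f\in L^2(\gg)$, and the full orthogonality relation follows by density and sesquilinearity. The second equality in \eqref{o6_item1} is then immediate from $\Wig(f,\phi)=\widehat{\Ac_\phi f}$ and the unitarity of the symplectic Fourier transform on $L^2(\gg\times\gg^*)$.

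Finally, part~\eqref{o6_item2} follows formally. The isometry statement is the case $\phi_1=\phi_2=\phi_0$, $f_1=f_2=f$ of the norm identity together with $\Vert\phi_0\Vert=1$. For the reconstruction formula I would read the integral weakly and pair it against an arbitrary $g\in L^2(\gg)$; using $(\pi(\exp_M(\theta(X,\xi)))\phi\mid g)=\overline{(\Ac_\phi g)(X,\xi)}$ the pairing becomes $(\Ac_{\phi_0}f\mid\Ac_\phi g)$, which by \eqref{o6_item1} equals $(f\mid g)(\phi\mid\phi_0)=((\phi\mid\phi_0)f\mid g)$. Since $g$ is arbitrary, the displayed identity follows, and the special case $\phi=\phi_0$ gives the inversion formula because then $(\phi_0\mid\phi_0)=1$.
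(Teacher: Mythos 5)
Your argument is correct and follows essentially the same route as the paper: reduce to Schwartz data, integrate first in $\xi$ using the fact that $\Ac_\phi f(X,\cdot)$ is a Fourier transform composed with the Jacobian-$1$ diffeomorphism $\Psi_X$, cancel the unimodular magnetic phases, change variables $W=(-X)\ast Y$ in the $X$-integral, and finish by density, by unitarity of the symplectic Fourier transform, and by a weak pairing for the reconstruction formula in part~(2). The only divergence is technical: where you invoke Plancherel directly after the change of variables (the structure recorded in Remark~\ref{o8}), the paper reaches the same identity via a Gaussian regularization $\ee^{-\varepsilon|\xi|^2}$ and a limit of tempered distributions converging to $\frac{1}{(2\pi)^n}\delta(Y_1-Y_2)$, and for part~(2) it cites Prop.~2.11 of \cite{Fue05} instead of spelling out the pairing argument as you do.
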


\begin{proof}
\eqref{o6_item1} 
We may assume $f_1,f_2,\phi_1,\phi_2\in\Sc(\gg)$. 
Let $X\in\gg$ be fixed for the moment. 
We have by Lemma~\ref{o4}
$$
\begin{aligned}
\int\limits_{\gg^*} {\Ac}_{\phi_1}f_1(X,\xi)&\cdot\overline{{\Ac}_{\phi_2}f_2(X,\xi)}\de\xi \\
=&\lim_{\varepsilon\to0}\int\limits_{\gg^*}\ee^{-\epsilon\vert\xi\vert^2}\cdot
 {\Ac}_{\phi_1}f_1(X,\xi)\cdot\overline{{\Ac}_{\phi_2}f_2(X,\xi)}\de\xi \\
=&\lim_{\varepsilon\to0}\iiint\limits_{\gg^*\times\gg\times\gg}\ee^{-\epsilon\vert\xi\vert^2}\cdot
f_1(Y_1)\cdot\overline{\tau_A(X,Y_1)}\cdot\overline{\phi_1((-X)\ast Y_1)}\cdot\overline{f_2(Y_2)} \\
&\times \tau_A(X,Y_2)\cdot\phi_1((-X)\ast Y_2)\cdot
\ee^{\ie\langle\xi,\Psi_X(-Y_1)-\Psi_X(-Y_2)\rangle}
\de Y_1\de Y_2\de\xi \\
=&\lim_{\varepsilon\to0}\langle U_\varepsilon,F_X\rangle
\end{aligned}
$$
where $\langle\cdot,\cdot\rangle\colon\Sc'(\gg\times\gg)\times\Sc(\gg\times\gg)\to\CC$ 
stands for the usual duality between the tempered distributions and the Schwartz space.  
Here we think of the function 
$$U_\varepsilon(Y_1,Y_2)=\int\limits_{\gg^*}\ee^{-\epsilon\vert\xi\vert^2}\cdot
 \ee^{\ie\langle\xi,\Psi_X(-Y_1)-\Psi_X(-Y_2)\rangle}\de\xi$$
as tempered distribution on $\gg\times\gg$, while the function 
$$F_X(Y_1,Y_2)=f_1(Y_1)\cdot\overline{\tau_A(X,Y_1)}
\cdot\overline{\phi_1((-X)\ast Y_1)}\cdot\overline{f_2(Y_2)}\cdot
\tau_A(X,Y_2)\cdot\phi_1((-X)\ast Y_2) $$
belongs to $\Sc(\gg\times\gg)$. 
Since $\Psi_X\colon\gg\to\gg$ is a polynomial diffeomorphism of $\gg$ whose inverse is again 
a polynomial diffeomorphism (by Prop.~3.2 in \cite{BB09}), it follows by a standard reasoning 
that 
$$\lim_{\varepsilon\to0}U_\varepsilon
=\frac{1}{(2\pi)^n}\delta(\Psi_X(-Y_1)-\Psi_X(-Y_2))
=\frac{1}{(2\pi)^n}\delta(Y_1-Y_2) $$
in the weak topology of the space~$\Sc'(\gg\times\gg)$, 
where $\delta(\cdot)$ is the Dirac distribution 
at~$0\in\gg$. 
We then obtain 
$$
\begin{aligned}
\int\limits_{\gg^*} {\Ac}_{\phi_1}f_1(X,\xi)&\cdot\overline{{\Ac}_{\phi_2}f_2(X,\xi)}\de\xi \\
=& \frac{1}{(2\pi)^n}\int\limits_{\gg}F_X(Y,Y)\,\de Y \\
=& \frac{1}{(2\pi)^n}\int\limits_{\gg}f_1(Y)\cdot\overline{\phi_1((-X)\ast Y)}\cdot
f_2(Y)\cdot\phi_2((-X)\ast Y)\,\de Y
\end{aligned}
$$
since $\vert\tau_A(X,Y)\vert=1$. 
By integrating the above equality with respect to $X\in\gg$ 
and taking into account our convention on the relationship between 
the Lebesgue measures on $\gg$ and $\gg^*$, we eventually get 
$$({\Ac}_{\phi_1}f_1\mid {\Ac}_{\phi_2}f_2)_{L^2(\gg\times\gg^*)} 
=(f_1\mid f_2)_{L^2(\gg)}\cdot(\phi_2\mid\phi_1)_{L^2(\gg)}.$$
This is just the first equation we wished for. 
The second equality in the assertion follows from this one 
by using the well-known fact that the symplectic Fourier transform 
$L^2(\gg\times\gg^*)\to L^2(\gg\times\gg^*)$ 
is a unitary operator. 

\eqref{o6_item2} 
It follows at once by Assertion~\eqref{o6_item1} that 
the operator ${\Ac}_{\phi_0}\colon L^2(\gg)\to L^2(\gg\times\gg^*)$ 
is an isometry if $\Vert\phi_0\Vert=1$. 
The other properties then follow by general arguments; 
see for instance Proposition~2.11 in \cite{Fue05}. 
\end{proof}

\begin{proposition}\label{o7}
If $f,\phi\in\Sc(\gg)$, then the following assertions hold: 
\begin{enumerate}
\item\label{o7_item1} 
For every $(X,\xi)\in\gg\times\gg^*$ we have 
$$\begin{aligned}
({\Ac}_\phi f)(X,\xi)=\int\limits_{\gg} & \ee^{-\ie\hake{\xi,Y}}\overline{\tau_A(X,-\Psi_X^{-1}(-Y))} \\
& \times f(-\Psi_X^{-1}(-Y))\overline{\phi((-X)\ast (-\Psi_X^{-1}(-Y)))}\,\de Y. 
\end{aligned}$$
\item\label{o7_item2} 
For every $(Y,\eta)\in\gg\times\gg^*$ we have 
$$\begin{aligned}
\Wig(f,\phi)(Y,\eta)=\int\limits_{\gg} & \ee^{-\ie\hake{\eta,X}}\overline{\tau_A(X,-\Psi_X^{-1}(-Y))} \\
& \times f(-\Psi_X^{-1}(-Y))\overline{\phi((-X)\ast (-\Psi_X^{-1}(-Y)))}\,\de X. 
\end{aligned}$$
\end{enumerate}
\end{proposition}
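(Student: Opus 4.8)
The plan is to establish \eqref{o7_item1} by a direct computation from Definition~\ref{o5} together with the explicit formula~\eqref{o4_item1}, followed by a single change of variables, and then to deduce \eqref{o7_item2} from \eqref{o7_item1} by reading the symplectic Fourier transform as an iterated partial Fourier transform.

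For \eqref{o7_item1} I would first unfold $(\Ac_\phi f)(X,\xi)=(f\mid\pi(\exp_M(\theta(X,\xi)))\phi)$ using the $L^2$-pairing and substitute \eqref{o4_item1} for the vector $\pi(\exp_M(\theta(X,\xi)))\phi$. Since $\vert\tau_A(X,Y)\vert=1$ and $\phi$ is bounded while $f\in\Sc(\gg)$, the resulting integral over $\gg$ is absolutely convergent, so no Gaussian regularization is needed here. This gives
$$(\Ac_\phi f)(X,\xi)=\int\limits_\gg f(Y)\,\overline{\tau_A(X,Y)}\,\ee^{\ie\langle\xi,\Psi_X(-Y)\rangle}\,\overline{\phi((-X)\ast Y)}\,\de Y.$$
I would then carry out the change of variable $Y=-\Psi_X^{-1}(-Z)$, equivalently $\Psi_X(-Y)=-Z$, which converts the oscillatory factor $\ee^{\ie\langle\xi,\Psi_X(-Y)\rangle}$ into $\ee^{-\ie\langle\xi,Z\rangle}$ and, after renaming $Z$ to $Y$, produces exactly the asserted integrand.

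The one delicate point — and the main obstacle — is the Jacobian of this substitution. Here I would invoke that $\Psi_X$ is a polynomial diffeomorphism of $\gg$ (Prop.~3.2 in \cite{BB09}) which moreover preserves Lebesgue measure: in exponential coordinates adapted to the central filtration, each right translation $Y\mapsto Y\ast(sX)$ has lower-triangular differential with $1$'s on the diagonal, hence so does the average $D\Psi_X(Y)=\int_0^1 D_Y(Y\ast(sX))\,\de s$, so that $\det D\Psi_X\equiv1$. The inverse $\Psi_X^{-1}$ then also has unit Jacobian, and composing with the antipode $Y\mapsto-Y$ at both ends contributes a factor $((-1)^n)^2=1$; thus $Y=-\Psi_X^{-1}(-Z)$ has Jacobian $1$ in absolute value and introduces no extra constant.

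For \eqref{o7_item2} I would read \eqref{o7_item1} as the statement that, for fixed $X$, the function $\xi\mapsto(\Ac_\phi f)(X,\xi)$ is the Fourier transform of
$$H_X(Y):=\overline{\tau_A(X,-\Psi_X^{-1}(-Y))}\,f(-\Psi_X^{-1}(-Y))\,\overline{\phi((-X)\ast(-\Psi_X^{-1}(-Y)))}.$$
Since $f,\phi\in\Sc(\gg)$ yield $\Ac_\phi f\in\Sc(\gg\times\gg^*)$ (Remark~\ref{ext_amb}), the symplectic Fourier transform $\Wig(f,\phi)=\widehat{\Ac_\phi f}$ is an absolutely convergent integral, and for the symplectic structure fixed in Definition~\ref{o5} it factors as a Fourier transform in $\xi$ (dual to $Y$) composed with one in $X$ (dual to $\eta$). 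Performing the $\xi$-integration first inverts the partial transform of the previous step by Fourier inversion, collapsing it to the value $H_X(Y)$, while the $X$-integration supplies the factor $\ee^{-\ie\langle\eta,X\rangle}$; the normalizations match because both the Fourier transform on $\gg$ and the symplectic Fourier transform on $\gg\times\gg^*$ are unitary for the measures chosen in Setting~\ref{o1}. This gives precisely the formula in \eqref{o7_item2}.
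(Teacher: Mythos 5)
Your proposal is correct and follows essentially the same route as the paper: unfold the definition via Lemma~\ref{o4}, perform the measure-preserving substitution $Z=-\Psi_X^{-1}(-Y)$ (the paper simply cites Prop.~3.2 of \cite{BB09} for the unit Jacobian, which your triangularity remark justifies), and then obtain \eqref{o7_item2} by applying Fourier inversion in the $\xi$-variable inside the symplectic Fourier transform. No gaps.
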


\begin{proof}
It follows by Definition~\ref{o5} and Lemma~\ref{o4} that 
$$({\Ac}_\phi f)(X,\xi)=\int\limits_{\gg}f(Z)\overline{\tau_A(X,Z)}\ee^{\ie\hake{\xi,\Psi_X(-Z)}}
\overline{\phi((-X)\ast Z)}\,\de Z. 
$$
Since $\Psi_X\colon\gg\to\gg$ is a diffeomorphism with 
the Jacobian function equal to~1 everywhere, we can change variables and 
set $Y=-\Psi_X(-Z)$ in the above integral. 
Then $Z=-\Psi_X^{-1}(-Y)$ and we get the formula in Assertion~\eqref{o7_item1}. 
Then recall from Definition~\ref{o5} that 
$$\Wig(f,\phi)(Y,\eta)=\iint\limits_{\gg\times\gg^*}\ee^{-\ie(\hake{\eta,X}-\hake{\xi,Y})}
(\Ac_\phi f)(X,\xi)\,\de\xi\de X $$
If we plug in the formula of Assertion~\eqref{o7_item1} in the above equation 
and use the Fourier inversion formula, then we get the formula for $\Wig(f,\phi)(Y,\eta)$ 
as claimed. 
\end{proof}

\begin{remark}\label{o8}
\normalfont
It follows by Proposition~\ref{o7}\eqref{o7_item1} 
that the function ${\Ac}_\phi f(X,\cdot)\colon\gg^*\to\CC$ 
is equal to the inverse Fourier transform of the function 
$$\overline{\tau_A(X,-\Psi_X^{-1}(\cdot))}f(-\Psi_X^{-1}(\cdot))\overline{\phi((-X)\ast (-\Psi_X^{-1}(\cdot)))}\colon\gg\to\CC.$$
\qed
\end{remark}

\begin{remark}\label{o9}
\normalfont
We can use the above Proposition~\ref{o7} along with Prop.~3.2 in \cite{BB09} 
to check that the bilinear mappings 
$$\Ac(\cdot,\cdot),\Wig(\cdot,\cdot)\colon\Sc(\gg)\times\Sc(\gg)\to\Sc(\gg\times\gg^*)$$
are continuous. 
\qed
\end{remark}

\subsection{Magnetic pseudo-differential operators 
and Wigner distributions}\label{subsect2.2}
This subsection includes background material 
from~\cite{BB09} together with some new properties 
of the magnetic Weyl calculus on nilpotent Lie groups. 

\begin{definition}\label{pseudo}
\normalfont  
For every $a\in\Sc(\gg\times\gg^*)$ 
the corresponding \emph{magnetic pseudo-differential operator} is defined by 
\begin{equation}\label{pseudo_def}
\Op(a)f=\iint\limits_{\gg\times\gg^*}\widehat{a}(X,\xi)\cdot\pi(\exp_M(\theta(X,\xi)))f\,\de(X,\xi)
\end{equation}
for every $f\in\Sc(\gg)$, where $\theta\colon\gg\times\gg^*\to\Lie(M)$ is 
described in Setting~\ref{o1}
\qed
\end{definition}

We record in the following proposition some immediate properties 
of the magnetic pseudo-differential operators constructed 
in Definition~\ref{pseudo}. 

\begin{proposition}\label{pseudo_prop}
The following assertions hold: 
\begin{enumerate}
\item\label{pseudo_prop_item2} 
For each $a\in\Sc(\gg\times\gg^*)$ we have 
$$(\Op(a)f\mid\phi)_{L^2(\gg)}=(\widehat{a}\mid\Ac_f\phi)_{L^2(\gg\times\gg^*)}
=(a\mid\Wig(\phi,f))_{L^2(\gg\times\gg^*)}$$
whenever $f,\phi\in\Sc(\gg)$. 
\item\label{pseudo_prop_item3} 
If $\phi_1,\phi_2\in\Sc(\gg)$ and $a:=\Wig(\phi_1,\phi_2)\in\Sc(\gg\times\gg^*)$, 
then $\Op(a)$ is a rank-one operator, namely 
$$\Op(a)f=(f\mid\phi_2)_{L^2(\gg)}\cdot \phi_1 \text{ for every }f\in\Sc(\gg).$$ 
\end{enumerate}
\end{proposition}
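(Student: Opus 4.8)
The plan is to reduce both assertions to the definitions of $\Op$, $\Ac$ and $\Wig$ together with Moyal's identity from Theorem~\ref{o6}. For assertion~\eqref{pseudo_prop_item2}, the first thing I would note is that the integrand in \eqref{pseudo_def} is Bochner integrable in $L^2(\gg)$: since $\widehat{a}\in\Sc(\gg\times\gg^*)\subset L^1(\gg\times\gg^*)$ and each $\pi(\exp_M(\theta(X,\xi)))$ is unitary, the $L^2$-norm of the integrand is dominated by $|\widehat{a}(X,\xi)|\cdot\Vert f\Vert$. Hence $\Op(a)f\in L^2(\gg)$ and, by continuity of the scalar product, the pairing with $\phi$ may be moved inside the integral. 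Recalling from Definition~\ref{o5} that $(\Ac_f\phi)(X,\xi)=(\phi\mid\pi(\exp_M(\theta(X,\xi)))f)_{L^2(\gg)}$, so that $(\pi(\exp_M(\theta(X,\xi)))f\mid\phi)_{L^2(\gg)}=\overline{(\Ac_f\phi)(X,\xi)}$, I obtain at once
$$(\Op(a)f\mid\phi)_{L^2(\gg)}=\iint\limits_{\gg\times\gg^*}\widehat{a}(X,\xi)\,\overline{(\Ac_f\phi)(X,\xi)}\,\de(X,\xi)=(\widehat{a}\mid\Ac_f\phi)_{L^2(\gg\times\gg^*)},$$
which is the first equality.

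For the second equality I would exploit a special feature of the symplectic Fourier transform: because the phase appearing in it is the \emph{antisymmetric} canonical symplectic form, the transform squares to the identity, in contrast to the ordinary Fourier transform. Being also unitary (as already recorded in the proof of Theorem~\ref{o6}), it is therefore self-adjoint, so that $(\widehat{a}\mid g)_{L^2(\gg\times\gg^*)}=(a\mid\widehat{g})_{L^2(\gg\times\gg^*)}$ for every $g\in L^2(\gg\times\gg^*)$. Taking $g=\Ac_f\phi$ and recalling from Definition~\ref{o5} that $\widehat{\Ac_f\phi}=\Wig(\phi,f)$ then yields $(\widehat{a}\mid\Ac_f\phi)=(a\mid\Wig(\phi,f))$, completing \eqref{pseudo_prop_item2}.

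For assertion~\eqref{pseudo_prop_item3}, with $a:=\Wig(\phi_1,\phi_2)$ I would test $\Op(a)f$ against an arbitrary $\psi\in\Sc(\gg)$ by means of the identity just proved:
$$(\Op(a)f\mid\psi)_{L^2(\gg)}=(a\mid\Wig(\psi,f))_{L^2(\gg\times\gg^*)}=(\Wig(\phi_1,\phi_2)\mid\Wig(\psi,f))_{L^2(\gg\times\gg^*)}.$$
Applying Moyal's identity from Theorem~\ref{o6}\eqref{o6_item1} to the pair $\Wig(\phi_1,\phi_2)$, $\Wig(\psi,f)$ collapses the right-hand side to $(\phi_1\mid\psi)_{L^2(\gg)}\cdot(f\mid\phi_2)_{L^2(\gg)}$, which is exactly $((f\mid\phi_2)_{L^2(\gg)}\,\phi_1\mid\psi)_{L^2(\gg)}$. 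Since $\psi$ ranges over the dense subspace $\Sc(\gg)$ of $L^2(\gg)$ and both $\Op(a)f$ and $(f\mid\phi_2)_{L^2(\gg)}\phi_1$ lie in $L^2(\gg)$ (the former by the Bochner estimate of the first paragraph), I conclude $\Op(a)f=(f\mid\phi_2)_{L^2(\gg)}\,\phi_1$.

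The individual manipulations here are routine once the definitions are unwound; the only step demanding genuine care is the self-adjointness (equivalently, the involutivity) of the symplectic Fourier transform, since it is precisely this property that legitimizes replacing $\widehat{a}$ paired against $\Ac_f\phi$ by $a$ paired against $\Wig(\phi,f)$. Beyond that, the sole place where errors could slip in is the consistent bookkeeping of the conjugate-linear slot of the scalar product, which must be tracked through every contraction of Moyal's identity.
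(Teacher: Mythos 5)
Your proof is correct and follows essentially the same route as the paper: the paper disposes of assertion~\eqref{pseudo_prop_item2} in one line as a direct consequence of \eqref{pseudo_def} and Definition~\ref{o5}, and proves \eqref{pseudo_prop_item3} exactly as you do, by testing against $\psi\in\Sc(\gg)$ and invoking Moyal's identity plus density. The only difference is that you spell out details the paper leaves implicit --- the Bochner integrability justifying the interchange of integral and inner product, and the involutivity (hence self-adjointness) of the unitary symplectic Fourier transform --- both of which are correct and correctly used.
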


\begin{proof}
Assertion~\eqref{pseudo_prop_item2} is a consequence of formula~\eqref{pseudo_def} 
along with Definition~\ref{o5}. 
Then Assertion~\eqref{pseudo_prop_item3} follows by Assertion~\eqref{pseudo_prop_item2} 
by taking into account Moyal's identity (Theorem~\ref{o6}\eqref{o6_item1}). 
In fact, we get 
$$\begin{aligned}
(\Op(\Wig(\phi_1,\phi_2))f\mid\phi)
&=(\Wig(\phi_1,\phi_2)\mid\Wig(\phi,f))
=(\phi_1\mid\phi)\cdot(f\mid\phi_2) \\
&=((f\mid\phi_2)\phi_1 \mid\phi) 
\end{aligned}$$
for arbitrary $\phi\in\Sc(\gg)$, 
and the conclusion follows since $\Sc(\gg)$ is dense in $L^2(\gg)$. 
\end{proof}

\begin{remark}\label{pseudo_ext}
\normalfont
We can use the equations in above Proposition~\ref{pseudo_prop}\eqref{pseudo_prop_item2} 
and Remark~\ref{o9} 
to define for every $a\in\Sc'(\gg\times\gg^*)$ the corresponding 
magnetic pseudo-differential operator as a continuous linear operator 
$\Op(a)\colon\Sc(\gg)\to\Sc'(\gg)$. 
It follows by this definition that the following assertions hold: 
\begin{enumerate}
\item If $\lim\limits_{j\in J}a_j=a$ in the weak$^*$-topology in $\Sc'(\gg\times\gg^*)$, 
then for every $f\in\Sc(\gg)$ we have 
$\lim\limits_{j\in J}\Op(a_j)f=\Op(a)f$ in the weak$^*$-topology in $\Sc'(\gg)$. 
\item The distribution kernel $K_a\in\Sc'(\gg\times\gg)$ of the operator 
$\Op(a)\colon\Sc(\gg)\to\Sc'(\gg)$ is given by the formula 
\begin{equation}\label{kernel1}
K_a=\alpha_A\cdot (((1\otimes F_{\gg}^{-1})a)\circ\Sigma),
\end{equation}
where the function $\alpha_A$ multiplies 
the composition between partial inverse Fourier transform 
$(1\otimes F_{\gg}^{-1})a\in\Sc'(\gg\times\gg)$ 
and the polynomial diffeomorphism 
\begin{equation*}
\Sigma\colon\gg\times\gg\to\gg\times\gg,\quad 
\Sigma(X,Y)=\Bigl( \int \limits_0^1 (s(Y\ast (-X))) \ast X) \, \de s, X\ast (-Y) \Bigr)
\end{equation*}
whose inverse is again polynomial. 
In fact, this follows by Th.~4.4 and eq.~(4.14) in \cite{BB09} 
for $a\in\Sc(\gg\times\gg^*)$. 
Then the general case 
can be obtained by the preceding continuiy property, 
since $\Sc(\gg\times\gg^*)$ is weakly$^*$-dense in $\Sc'(\gg\times\gg^*)$. 
\end{enumerate}
For the sake of completeness, let us write~\eqref{kernel1} 
explicitly as 
\begin{equation}\label{kernel2}
K_a(X,Y)=\alpha_A(X,Y)\int\limits_{\gg^*}
\ee^{\ie\hake{\xi,X\ast (-Y)}}a\Bigl( \int \limits_0^1 (s(Y\ast (-X))) \ast X \, \de s,\xi \Bigr)\,\de\xi
\end{equation}
which makes sense whenever $a\in\Sc'(\gg\times\gg^*)$ 
is defined by a function such that the right-hand side is well defined. 
Here we have used the notation 
\begin{equation}\label{alpha_A}
\alpha_A(X,Y) = 
\exp\Bigl({\ie \int\limits_0^1 \scalar{A((s(Y\ast(-X)))\ast X)}{(R_{(s(Y\ast(-X)))\ast X})'_0(X\ast (-Y))} \, \de s}\Bigr)
\end{equation} 
for every $X,Y\in\gg$ 
(see eq.~(4.13) in \cite{BB09}). 
\qed
\end{remark}

\begin{example}\label{pseudo_ex}
\normalfont
We wish to use Remark~\ref{pseudo_ext} in order to compute 
the magnetic pseudo-differential operators defined by some special types of symbols. 
\begin{enumerate}
\item\label{pseudo_ex_item1} 
Let $a\colon\gg\to\CC$ be a smooth function of polynomial growth 
and look at it as a symbol in $\Sc'(\gg\times\gg^*)$ 
depending only on the variable in~$\gg$. 
Since $\alpha_A(X,X)=1$, it the follows at once from~\eqref{kernel1} 
that $\Op(a)$ is the multiplication operator in $L^2(\gg)$ defined by 
the function~$a$. 
\item\label{pseudo_ex_item2}
Let $X_0\in\gg$ and define $a_{X_0}\colon\gg\times\gg^*\to\CC$, $a_{X_0}(X,\xi)=\hake{\xi,X_0}$. 
Then it follows by Th.~4.4(1) (and its proof) in \cite{BB09} that 
$$\Op(a_{X_0})=-\ie\dot{\lambda}(X_0)+A(Q)X_0$$
and this operator is the infinitesimal generator of a 1-parameter group of unitary operators, 
hence it is essentially self-adjoint in~$L^2(\gg)$. 
Here $\dot{\lambda}(X_0)$ is the first-order differential operator 
defined by the right-invariant vector field~$\overline{X}_0$ on the nilpotent Lie group $(\gg,\ast)$ 
whose value at $0\in\gg$ is~$X_0$. 
On the other hand, $A(Q)X_0$ stands for the multiplication operator 
given by the function whose value at an arbitrary point 
is obtained by applying the 1-form $A\in\Omega^1(\gg)$ 
to the aforementioned vector field~$\overline{X}_0$. 
Let us note that an explicit formula for $\dot{\lambda}(X_0)$ 
can be easily obtained by Lemma~5 in \cite{Ma07}, namely 
for every $f\in\Ci(\gg)$ and $Y\in\gg$ we have 
$(\dot{\lambda}(X_0)f)(Y)=\hake{f'_Y,\overline{X}_0(Y)}$, 
which is the first-order differential operator defined by the vector field 
$\overline{X}_0\colon\gg\to\gg$, 
\begin{equation}\label{maillard}
\overline{X}_0(Y)=\Rc(\ad_\gg Y)X_0=X_0-\frac{1}{2}[Y,X_0]+\frac{1}{12}[Y,[Y,X_0]]+\cdots. 
\end{equation}
Here we use the holomorphic function 
$\Rc\colon\CC\setminus 2\pi\ie\ZZ^*\to\CC$, $\Rc(z)=z/(e^z-1)$ 
whose power series around $0$ is $1-\frac{1}{2}z+\frac{1}{12}z^2+\cdots$. 
\item\label{pseudo_ex_item3} 
Now assume that the magnetic potential $A$ vanishes.  
Let $a\in L^1(\gg^*)$ and think of it as a symbol in $\Sc'(\gg\times\gg^*)$ 
depending only on the variable in~$\gg^*$. 
If we denote by $b\in L^\infty(\gg)$ the inverse Fourier transform of~$a$, 
then it follows by~\eqref{kernel2} that $K_a(X,Y)=b(X\ast(-Y))$, hence 
$$(\forall f\in\Sc(\gg)) \quad (\Op(a)f)(X)=\int\limits_{\gg}b(X\ast(-Y))f(Y)\,\de X. $$
Thus $\Op(a)$ is a convolution operator on the nilpotent Lie group $(\gg,\ast)$. 
\end{enumerate}
\qed
\end{example}

Our next aim is to show that the Weyl calculus with real symbols 
gives rise to symmetric pseudo-differential operators; 
see Proposition~\ref{self-adj} below. 

\begin{lemma}\label{symm}
If we define
$$\Sigma_1\colon\gg\times\gg\to\gg,\quad 
\Sigma_1(X,Y)=\int\limits_0^1(s(Y\ast (-X)))\ast X\,\de s, $$
then for every $X,Y\in\gg$ we have $\Sigma_1(X,Y)=\Sigma_1(Y,X)$. 
\end{lemma}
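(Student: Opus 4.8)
The plan is to recognize the integrand as a curve in $G=(\gg,\ast)$ and to reduce the claimed symmetry to an orientation-reversing reparametrization of that curve. Set $W:=Y\ast(-X)$ and consider $\gamma(s):=(sW)\ast X$, so that $\Sigma_1(X,Y)=\int_0^1\gamma(s)\,\de s$. Recalling from Setting~\ref{o1} that $G$ is identified with $\gg$ via the exponential map, so that $0$ is the identity and the group inverse is $Z\mapsto -Z$, one checks $\gamma(0)=X$ and $\gamma(1)=W\ast X=(Y\ast(-X))\ast X=Y$; thus $\gamma$ is a path in $G$ joining $X$ to $Y$. Likewise write $\Sigma_1(Y,X)=\int_0^1\delta(s)\,\de s$ with $\delta(s):=(sV)\ast Y$ and $V:=X\ast(-Y)$.

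The core of the argument is to establish that $\delta(s)=\gamma(1-s)$. First I would record two algebraic identities. The identity $V=-W$ follows from $V=X\ast(-Y)=((-X))^{-1}\ast Y^{-1}=(Y\ast(-X))^{-1}=W^{-1}=-W$, using $(ab)^{-1}=b^{-1}a^{-1}$ and that inversion is negation. The identity $Y=W\ast X$ is exactly the computation of $\gamma(1)$ above. With these in hand,
$$\delta(s)=(sV)\ast Y=(-sW)\ast(W\ast X)=\bigl((-sW)\ast W\bigr)\ast X=\bigl((1-s)W\bigr)\ast X=\gamma(1-s),$$
where the decisive step $(-sW)\ast W=(1-s)W$ uses that the scalar multiples $\{tW:t\in\RR\}$ of the fixed vector $W$ form a one-parameter subgroup of $G$, on which $\ast$ restricts to addition.

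The conclusion then follows from the change of variables $u=1-s$, giving $\Sigma_1(Y,X)=\int_0^1\gamma(1-s)\,\de s=\int_0^1\gamma(u)\,\de u=\Sigma_1(X,Y)$, which is the assertion.

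As for difficulties, there is no genuine analytic obstacle here: everything rests on elementary properties of the Baker--Campbell--Hausdorff product. The only point requiring care is the bookkeeping with the identifications from Setting~\ref{o1}, namely that under $G=(\gg,\ast)$ the inverse is $Z\mapsto -Z$ and that $(aW)\ast(bW)=(a+b)W$ along the one-parameter subgroup generated by $W$. Once these are fixed, the symmetry is simply the invariance of the integral of the curve $\gamma$ under the reparametrization $s\mapsto 1-s$, which swaps its endpoints $X$ and $Y$.
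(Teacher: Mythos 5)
Your proof is correct. Every step checks out: the identification $\Sigma_1(X,Y)=\int_0^1\gamma(s)\,\de s$ with $\gamma(s)=(sW)\ast X$, the endpoint computations $\gamma(0)=X$ and $\gamma(1)=W\ast X=Y$, the identity $V=-W$ (inversion in $(\gg,\ast)$ is negation and $(A\ast B)^{-1}=(-B)\ast(-A)$), and the decisive step $(-sW)\ast W=(1-s)W$, which holds because all Baker--Campbell--Hausdorff correction terms between proportional vectors vanish. The underlying mechanism is exactly the one the paper uses --- the one-parameter subgroup identity $(aW)\ast(bW)=(a+b)W$ combined with the reflection $s\mapsto 1-s$ --- but your packaging is genuinely different and, arguably, more transparent. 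The paper first proves the auxiliary identity $\Psi_X(Y\ast(-X))=\Psi_{-X}(Y)$ for the map $\Psi$ of Notation~\ref{o3}, then substitutes $X\mapsto(-X)\ast Y$ and finally observes that $\Sigma_1(X,Y)=-\Psi_{X\ast(-Y)}(-X)$; this keeps the computation inside the already-introduced $\Psi$-formalism but obscures what is going on. You instead recognize $\Sigma_1(X,Y)$ directly as the integral of a path in $G$ joining $X$ to $Y$, and the symmetry as invariance of that integral under orientation reversal of the path. What the paper's route buys is the reusable identity for $\Psi$; what yours buys is a conceptual one-line explanation of why the symmetry must hold.
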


\begin{proof}
Note that for every $X,Y\in\gg$ we have 
\begin{equation*}
\Psi_X(Y\ast(-X))=\int\limits_0^1 Y\ast(-X)\ast sX \,\de s
=\int\limits_0^1 Y\ast(-(1-s)X)\,\de s
=\Psi_{-X}(Y).
\end{equation*}
If we replace $X$ by $(-X)\ast Y$, 
then we get $\Psi_{(-X)\ast Y}(Y\ast(-Y)\ast X)=\Psi_{(-Y)\ast X}(Y)$, 
that is, 
$$(\forall X,Y\in\gg)\quad \Psi_{(-X)\ast Y}(X)=\Psi_{(-Y)\ast X}(Y).$$
Now the conclusion follows since 
$$\Sigma_1(X,Y)=-\Psi_{X\ast(-Y)}(-X)$$ 
for every $X,Y\in\gg$. 
\end{proof}

\begin{proposition}\label{self-adj}
Let $a\in\Sc'(\gg\times\gg^*)$ be a real distribution, 
in the sense that its values on real valued functions are real numbers. 
Then the distribution kernel $K_a\in\Sc'(\gg\times\gg)$ has 
the following symmetry property: 
$$(\forall f,\phi\in\Sc(\gg))\quad 
\hake{K_a,f\otimes\bar{\phi}}=\overline{\hake{K_a,\phi\otimes\bar{f}}}. $$
\end{proposition}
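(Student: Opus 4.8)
The plan is to recast the asserted symmetry as a single pointwise identity for the kernel and then read it off from the explicit formula for $K_a$ in Remark~\ref{pseudo_ext}, the only nontrivial input being a reflection symmetry of the magnetic phase $\alpha_A$.

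First I would unwind the duality pairings. Writing the pairing informally as $\hake{K_a,f\otimes\bar\phi}=\iint K_a(X,Y)f(X)\overline{\phi(Y)}\,\de X\,\de Y$ and, in $\overline{\hake{K_a,\phi\otimes\bar f}}=\iint\overline{K_a(X,Y)}\,\overline{\phi(X)}f(Y)\,\de X\,\de Y$, relabelling the two copies of $\gg$, one sees that the claimed identity, required for all $f,\phi\in\Sc(\gg)$, is equivalent to the single distributional identity $K_a(X,Y)=\overline{K_a(Y,X)}$; that is, $K_a$ coincides with the complex conjugate of its transpose. (At the level of distributions this is the statement that $K_a$ is fixed by the involution $K\mapsto\overline{K}^{\,\vee}$, where $\vee$ interchanges the two $\gg$-variables.)

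Next I would substitute the kernel formula \eqref{kernel1}, $K_a=\alpha_A\cdot\bigl(((1\otimes F_\gg^{-1})a)\circ\Sigma\bigr)$, with $\Sigma(X,Y)=(\Sigma_1(X,Y),X\ast(-Y))$, and record three facts. (i) $\Sigma_1(X,Y)=\Sigma_1(Y,X)$ by Lemma~\ref{symm}. (ii) The second component is odd under the swap, $-(Y\ast(-X))=X\ast(-Y)$, since these are mutually inverse elements of $(\gg,\ast)$. (iii) Because $a$ is real, the partial inverse Fourier transform $b:=(1\otimes F_\gg^{-1})a$ satisfies $\overline{b(Z,W)}=b(Z,-W)$, conjugation corresponding to reflection of the Fourier variable. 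Combining (i)--(iii), $\overline{K_a(Y,X)}$ reduces to $\overline{\alpha_A(Y,X)}\cdot b(\Sigma_1(X,Y),X\ast(-Y))$, which equals $K_a(X,Y)$ exactly when $\alpha_A(X,Y)=\overline{\alpha_A(Y,X)}$.

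The main obstacle, and the only step using the specific structure of the phase, is this reflection symmetry $\alpha_A(X,Y)=\overline{\alpha_A(Y,X)}$. Since $\lvert\alpha_A\rvert=1$, it suffices to show the exponent in \eqref{alpha_A} changes sign when $X$ and $Y$ are interchanged, and I would do this by the substitution $s\mapsto 1-s$. The curve $\gamma_{X,Y}(s)=(s(Y\ast(-X)))\ast X$ runs from $X$ at $s=0$ to $Y$ at $s=1$; using that $t\mapsto t(Y\ast(-X))$ is a one-parameter subgroup one checks $\gamma_{Y,X}(1-s)=\gamma_{X,Y}(s)$, so the two integration paths are reverses of one another. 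At the common base point the vector paired with $A(\gamma_{X,Y}(s))$ is $(R_{\gamma_{X,Y}(s)})'_0(X\ast(-Y))$ in one exponent and $(R_{\gamma_{X,Y}(s)})'_0(Y\ast(-X))$ in the other, and by (ii) together with the linearity of $(R_{\gamma_{X,Y}(s)})'_0$ these differ by a sign; hence the substitution turns the exponent of $\alpha_A(Y,X)$ into minus that of $\alpha_A(X,Y)$, as needed. Finally, to pass from real $a\in\Sc(\gg\times\gg^*)$, for which all the integrals above converge absolutely, to an arbitrary real $a\in\Sc'(\gg\times\gg^*)$, I would invoke weak$^*$-density of the Schwartz functions together with the continuity of $a\mapsto K_a$ noted in Remark~\ref{pseudo_ext}; alternatively the whole argument can be run directly in $\Sc'$, since multiplication by the smooth unimodular function $\alpha_A$, composition with the polynomial diffeomorphism $\Sigma$, and the partial Fourier transform are all well defined there.
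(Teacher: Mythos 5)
Your proposal is correct and follows essentially the same route as the paper's proof: the key step in both is the conjugate-symmetry $\alpha_A(Y,X)=\overline{\alpha_A(X,Y)}$, obtained by the substitution $s\mapsto 1-s$ after recognizing that the two integration paths in \eqref{alpha_A} are reverses of one another, combined with Lemma~\ref{symm} and the kernel formula \eqref{kernel2}. You merely make explicit two points the paper leaves implicit (the reduction of the pairing identity to $K_a(X,Y)=\overline{K_a(Y,X)}$ and the weak$^*$-density passage from $\Sc$ to $\Sc'$), both of which check out.
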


\begin{proof}
Firstly note that for every $X,Y\in\gg$ we have by \eqref{alpha_A}
\allowdisplaybreaks
\begin{align}
\alpha_A(Y,X) 
&=\exp\Bigl({\ie\int\limits_0^1\scalar{A((s(X\ast(-Y)))\ast Y)}{(R_{(s(X\ast(-Y)))\ast Y})'_0(Y\ast (-X))} \, \de s}\Bigr) \nonumber\\
&=\exp\Bigl({\ie\int\limits_0^1\scalar{A((-sZ)\ast Z\ast X)}{(R_{(-sZ)\ast Z\ast X})'_0 Z} \, \de s}\Bigr) \nonumber\\
&=\exp\Bigl({\ie\int\limits_0^1\scalar{A(((1-s)Z)\ast X)}{(R_{((1-s)Z)\ast X})'_0 Z} \, \de s}\Bigr) \nonumber\\
&=\exp\Bigl({\ie\int\limits_0^1\scalar{A((sZ)\ast X)}{(R_{(sZ)\ast X})'_0 Z} \, \de s}\Bigr) \nonumber\\
&=\exp\Bigl({\ie\int\limits_0^1\scalar{A((s(Y\ast(-X)))\ast X)}{(R_{(s(Y\ast(-X)))\ast X})'_0((Y\ast(-X)))} \, \de s}\Bigr) \nonumber\\
&=\exp\Bigl({-\ie\int\limits_0^1\scalar{A((s(Y\ast(-X)))\ast X)}{(R_{(s(Y\ast(-X)))\ast X})'_0((X\ast(-Y)))} \, \de s}\Bigr) \nonumber\\
&=\overline{\alpha(X,Y)}, \nonumber
\end{align}
where we used the notation $X\ast(-Y)=-Z$, hence $Y=Z\ast X$. 
Now the assertion follows at once by using Lemma~\ref{symm} and 
formula~\eqref{kernel2}. 
\end{proof}

The next result shows the significance of the marginal distributions 
of the cross-Wigner function in our setting. 
It is worth pointing out that this is a natural extension of 
the similar property in the classical case of the Schr\"odinger representation.
Actually, the functional calculus with both the position operators 
(see Assertion~\eqref{marginal_item1}) 
and the ``noncommutative magnetic momentum'' operators $-\ie\dot{\lambda}(X_0)+A(Q)X_0$
(Assertion~\eqref{marginal_item2}) 
can thus be read off with the cross-Wigner distribution. 

\begin{proposition}\label{marginal}
If $f,\phi\in\Sc(\gg)$, then the following assertions hold: 
\begin{enumerate}
\item\label{marginal_item1} 
For every $Y\in\gg$ we have 
$$f(Y)\overline{\phi(Y)}=\int\limits_{\gg^*}\Wig(f,\phi)(Y,\eta)\,\de\eta. $$
\item\label{marginal_item2}
If we define 
$$\Gamma_{f,\phi}\colon\gg^*\to\CC,\quad 
\Gamma_{f,\phi}(\eta)=\int\limits_{\gg}\Wig(f,\phi)(Y,\eta)\,\de Y,$$
then for every $X_0\in\gg$ and $a_0\in\Sc(\RR)$ we have 
$$(a_0(-\ie\dot{\lambda}(X_0)+A(Q)X_0)f\mid\phi)
=\int\limits_{\gg^*}\Gamma_{f,\phi}(\eta)a_0(\hake{\eta,X_0})\,\de\eta, $$
where the left-hand side involves the Borel functional calculus 
for the essentially self-adjoint operator $-\ie\dot{\lambda}(X_0)+A(Q)X_0$ 
in~$L^2(\gg)$. 
\end{enumerate}
\end{proposition}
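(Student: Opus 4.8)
The plan is to treat the two assertions separately, deriving \eqref{marginal_item1} directly from the integral formula in Proposition~\ref{o7}\eqref{o7_item2}, and reducing \eqref{marginal_item2} to Moyal's identity by first identifying the Borel functional calculus of the magnetic momentum operator with the magnetic Weyl calculus.

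For Assertion~\eqref{marginal_item1} I would start from Proposition~\ref{o7}\eqref{o7_item2}, which exhibits $\Wig(f,\phi)(Y,\eta)$ as the Fourier transform in the variable $X\in\gg$ of the function
$$X\mapsto \overline{\tau_A(X,-\Psi_X^{-1}(-Y))}\,f(-\Psi_X^{-1}(-Y))\,\overline{\phi((-X)\ast(-\Psi_X^{-1}(-Y)))}.$$
Integrating over $\eta\in\gg^*$ and invoking the Fourier inversion formula collapses this expression to the value of the above function at $X=0$. Since $\Psi_0=\id$ straight from Notation~\ref{o3}, one has $-\Psi_0^{-1}(-Y)=Y$ and $(-0)\ast(-\Psi_0^{-1}(-Y))=Y$, while $\tau_A(0,Y)=1$ because the integrand defining $\tau_A$ vanishes at $X=0$. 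The value at $X=0$ is therefore $f(Y)\overline{\phi(Y)}$, which is exactly the asserted identity. The only point requiring care is the normalization of the Lebesgue measures on $\gg$ and $\gg^*$, which is fixed in Setting~\ref{o1} precisely so that this inversion produces the constant~$1$.

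For Assertion~\eqref{marginal_item2} the central step is to show that the Borel functional calculus of $L:=-\ie\dot\lambda(X_0)+A(Q)X_0$ is compatible with Weyl quantization, namely $a_0(L)=\Op(b_{a_0})$ with $b_{a_0}(Y,\eta)=a_0(\hake{\eta,X_0})$. I would establish this in two stages. First, for the symbol $c_t(Y,\eta)=\ee^{\ie t\hake{\eta,X_0}}$ one computes that its symplectic Fourier transform is the Dirac distribution at the point $(-tX_0,0)\in\gg\times\gg^*$, so that Definition~\ref{pseudo}, interpreted distributionally as in Remark~\ref{pseudo_ext} and using the continuity of $(X,\xi)\mapsto\pi(\exp_M(\theta(X,\xi)))f$, gives $\Op(c_t)=\pi(\exp_M(\theta(-tX_0,0)))=\pi(\exp_M(-t\theta(X_0,0)))$. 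Differentiating at $t=0$ returns $\ie\Op(a_{X_0})=\ie L$ by Example~\ref{pseudo_ex}\eqref{pseudo_ex_item2}, so this one-parameter unitary group is exactly $\ee^{\ie tL}$. Second, writing $a_0\in\Sc(\RR)$ through its Fourier representation $a_0(s)=\int_\RR\check{a}_0(t)\ee^{\ie ts}\,\de t$ and combining Stone's theorem with the weak$^*$-continuity of $\Op$ from Remark~\ref{pseudo_ext}, I would move the $t$-integral inside the quantization to obtain $a_0(L)=\int_\RR\check{a}_0(t)\Op(c_t)\,\de t=\Op(b_{a_0})$.

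With this identification in hand the conclusion follows from Moyal's identity. Proposition~\ref{pseudo_prop}\eqref{pseudo_prop_item2} applied to the tempered-distribution symbol $b_{a_0}$ yields
$$(a_0(L)f\mid\phi)=(b_{a_0}\mid\Wig(\phi,f))_{L^2(\gg\times\gg^*)}=\int\limits_{\gg\times\gg^*}a_0(\hake{\eta,X_0})\,\overline{\Wig(\phi,f)(Y,\eta)}\,\de(Y,\eta).$$
A short computation with Definition~\ref{o5} and Lemma~\ref{o4}, using $\theta(-X,-\xi)=-\theta(X,\xi)$ and the unitarity of $\pi$, gives the conjugation symmetry $\overline{\Wig(\phi,f)}=\Wig(f,\phi)$. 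Substituting this and integrating first over $Y\in\gg$ to form $\Gamma_{f,\phi}$ produces the claimed identity $\int_{\gg^*}\Gamma_{f,\phi}(\eta)\,a_0(\hake{\eta,X_0})\,\de\eta$. I expect the main obstacle to be the first stage of Assertion~\eqref{marginal_item2}: rigorously justifying $a_0(L)=\Op(b_{a_0})$ — in particular that $\ee^{\ie tL}=\Op(c_t)$ as the distributional Weyl quantization of a merely bounded symbol, and that the $t$-integral may be interchanged with $\Op$ — since this is where Stone's theorem, the distributional symbol calculus, and the measure normalizations must all be reconciled.
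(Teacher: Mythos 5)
Your proof is correct and, in outline, coincides with the paper's: assertion \eqref{marginal_item1} is obtained exactly as in the paper, by integrating the formula of Proposition~\ref{o7}\eqref{o7_item2} over $\eta$ and using Fourier inversion to evaluate at $X=0$, where $\Psi_0=\id$ and $\tau_A(0,\cdot)=1$; and assertion \eqref{marginal_item2} is reduced, as in the paper, to the identity $a_0(-\ie\dot{\lambda}(X_0)+A(Q)X_0)=\Op(\1\otimes a_0(\hake{\cdot,X_0}))$ followed by Proposition~\ref{pseudo_prop}\eqref{pseudo_prop_item2}. The one genuine difference is how that identity is handled: the paper simply asserts it, citing Example~\ref{pseudo_ex}\eqref{pseudo_ex_item2} and Sect.~5.1 of \cite{Be06}, whereas you prove it from scratch by computing the symplectic Fourier transform of $c_t(Y,\eta)=\ee^{\ie t\hake{\eta,X_0}}$ to get $\Op(c_t)=\pi(\exp_M(-t\theta(X_0,0)))=\ee^{\ie tL}$ and then invoking Stone's theorem together with the weak$^*$-continuity of $\Op$ from Remark~\ref{pseudo_ext}. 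This buys a self-contained justification of the compatibility between the Borel functional calculus and the Weyl quantization, at the cost of the technical points you yourself flag (distributional symbols, interchange of the $t$-integral with $\Op$), which are routine here. You also make explicit the conjugation symmetry $\overline{\Wig(\phi,f)}=\Wig(f,\phi)$ (a consequence of the linearity of $\theta$ and the unitarity of $\pi$), which the paper uses silently when passing from $(\1\otimes a_0(\hake{\cdot,X_0})\mid\Wig(\phi,f))$ to the integral against $\Wig(f,\phi)$.
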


\begin{proof}
For Assertion~\eqref{marginal_item1} use Proposition~\ref{o7}\eqref{o7_item2} 
along with the Fourier inversion formula to get 
$$\begin{aligned}
\int\limits_{\gg^*}\Wig(f,\phi)(Y,\eta)\,\de\eta
&=\overline{\tau_A(0,-\Psi_0^{-1}(-Y))} 
\cdot f(-\Psi_0^{-1}(-Y))\cdot \overline{\phi(0\ast (-\Psi_0^{-1}(-Y)))} \\
&=f(Y)\overline{\phi(Y)}. 
\end{aligned}$$
The latter equality follows at once by the formulas in Notation~\ref{o3}. 

In order to prove Assertion~\eqref{marginal_item2}, let us denote by 
$\1\otimes a_0(\hake{\cdot,X_0})$ the function defined on $\gg\times\gg^*$ by 
$(X,\xi)\mapsto a_0(\hake{\xi,X_0})$. 
It then follows by Example~\ref{pseudo_ex}\eqref{pseudo_ex_item2} 
that $\1\otimes a_0(\hake{\cdot,X_0})=a_0\circ a_{X_0}$ 
and then 
$$a_0(-\ie\dot{\lambda}(X_0)+A(Q)X_0)=\Op(\1\otimes a_0(\hake{\cdot,X_0}))$$
(see also Sect.~5.1 in~\cite{Be06}). 
By using this equality along with Remark~\ref{pseudo_ext} and the formula 
in Proposition~\ref{pseudo_prop}\eqref{pseudo_prop_item2}, 
we get 
$$\begin{aligned}
(a_0(-\ie\dot{\lambda}(X_0)+A(Q)X_0)f\mid\phi)_{L^2(\gg)} 
&=(\1\otimes a_0(\hake{\cdot,X_0})\mid\Wig(\phi,f))_{L^2(\gg\times\gg^*)} \\
&=\iint\limits_{\gg\times\gg^*}a_0(\hake{\eta,X_0})\cdot\Wig(f,\phi)(Y,\eta)\,\de Y\de\eta \\
&=\int\limits_{\gg^*}a_0(\hake{\eta,X_0})\Bigl(\int\limits_{\gg}\Wig(f,\phi)(Y,\eta)\,\de Y\Bigr)\de\eta, 
\end{aligned}$$
and this leads to the asserted formula. 
\end{proof}

\subsection{Heisenberg's inequality}\label{subsect2.3}

In the following statement we shall use the symbols 
\begin{eqnarray}
a_{X_0}\colon\gg\times\gg^*\to\CC, & & a_{X_0}(X,\xi)=\hake{\xi,X_0}, \nonumber \\
a_{\xi_0}\colon\gg\times\gg^*\to\CC, & & a_{\xi_0}(X,\xi)=\hake{\xi_0,X}. \nonumber
\end{eqnarray}
for arbitrary $X_0\in\gg$ and $\xi_0\in\gg^*$. 

\begin{theorem}\label{ineq_th}
Let $X_0\in\gg$ and $c_0\in\RR$. 
Assume that the coadjoint orbit $\Oc\subseteq\gg^*$ is contained in the affine hyperplane 
$\{\xi\in\gg^*\mid\hake{\xi,X_0}=c_0\}$. 
Then  
\begin{equation}\label{ineq_th_eq1}
[\Op(a_{X_0}),\Op(a_{\xi_0})]=\ie c_0\cdot\id_{L^2(\gg)} 
\end{equation}
and 
\begin{equation}\label{ineq_th_eq2}
\Vert\Op(a_{X_0})f\Vert\cdot\Vert\Op(a_{\xi_0})f\Vert\ge\frac{1}{2}\vert c_0\vert 
\end{equation} 
for every $\xi_0\in\Oc$, 
whenever $f\in L^2(\gg)$ with $\Vert f\Vert=1$ belongs to the domains of both operators 
$\Op(a_{X_0})$ and $\Op(a_{\xi_0})$. 
\end{theorem}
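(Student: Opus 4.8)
The plan is to identify the two operators explicitly via Example~\ref{pseudo_ex}, to derive the commutation relation~\eqref{ineq_th_eq1} by a direct computation that turns the coadjoint-orbit hypothesis into the vanishing of certain iterated brackets, and then to obtain~\eqref{ineq_th_eq2} from the classical Heisenberg argument for a pair of self-adjoint operators obeying the canonical commutation relation. Concretely, since $a_{X_0}(X,\xi)=\hake{\xi,X_0}$ depends only on the variable in $\gg^*$, Example~\ref{pseudo_ex}\eqref{pseudo_ex_item2} gives $\Op(a_{X_0})=-\ie\dot\lambda(X_0)+A(Q)X_0$, which is essentially self-adjoint; and since $a_{\xi_0}(X,\xi)=\hake{\xi_0,X}$ depends only on the variable in $\gg$, Example~\ref{pseudo_ex}\eqref{pseudo_ex_item1} identifies $\Op(a_{\xi_0})$ with multiplication by the real function $Y\mapsto\hake{\xi_0,Y}$, which is self-adjoint on its natural domain.

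For the commutator, the observation is that the multiplication operator $A(Q)X_0$ commutes with $\Op(a_{\xi_0})$, so on the common core $\Sc(\gg)$ one has $[\Op(a_{X_0}),\Op(a_{\xi_0})]=-\ie[\dot\lambda(X_0),\Op(a_{\xi_0})]$. Because $\dot\lambda(X_0)$ is the first-order differential operator attached to the vector field $\overline{X}_0$ and $\Op(a_{\xi_0})$ is multiplication by $\hake{\xi_0,\cdot}$, this commutator is again multiplication, by the function $\dot\lambda(X_0)\hake{\xi_0,\cdot}$. Using $(\dot\lambda(X_0)g)(Y)=\hake{g'_Y,\overline{X}_0(Y)}$ with the linear choice $g=\hake{\xi_0,\cdot}$, so that $g'_Y=\xi_0$, and the expansion~\eqref{maillard}, this function equals $Y\mapsto\hake{\xi_0,\Rc(\ad_\gg Y)X_0}$. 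The crux is to show it is the constant $c_0$: since $\Oc$ is a coadjoint orbit inside the hyperplane $\{\hake{\cdot,X_0}=c_0\}$, for every $\xi_0\in\Oc$ and $Y\in\gg$ the curve $t\mapsto\hake{\Ad^*(\exp_G(tY))\xi_0,X_0}$ is identically $c_0$, and differentiating at $t=0$ forces $\hake{\xi_0,(\ad_\gg Y)^kX_0}=0$ for all $k\ge 1$. Hence only the zeroth-order term in the power series of $\Rc$ survives---the series being finite by nilpotency---and $\hake{\xi_0,\Rc(\ad_\gg Y)X_0}=\hake{\xi_0,X_0}=c_0$. This gives~\eqref{ineq_th_eq1} on $\Sc(\gg)$, once the sign conventions for $\Ad^*$ and for the pairing are fixed.

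Finally, writing $P=\Op(a_{X_0})$ and $T=\Op(a_{\xi_0})$, both self-adjoint, and taking a unit vector $f\in D(P)\cap D(T)$, I would use~\eqref{ineq_th_eq1} to write $\ie c_0=([P,T]f\mid f)=(Tf\mid Pf)-(Pf\mid Tf)=2\ie\,\Img(Tf\mid Pf)$, whence $\tfrac12|c_0|=|\Img(Tf\mid Pf)|\le|(Tf\mid Pf)|\le\Vert Pf\Vert\,\Vert Tf\Vert$ by Cauchy--Schwarz, which is exactly~\eqref{ineq_th_eq2}. I expect the main obstacle to be functional-analytic rather than algebraic: the commutation relation is only established on the core $\Sc(\gg)$, whereas $f$ is merely assumed to lie in $D(P)\cap D(T)$, so passing to general $f$ requires the usual regularization argument for Heisenberg inequalities with unbounded operators, using that $\Sc(\gg)$ is a common core for $P$ and $T$. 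By contrast, the commutator computation is routine once the orbit hypothesis has been recast as the vanishing of the iterated brackets $\hake{\xi_0,(\ad_\gg Y)^kX_0}$, $k\ge 1$, which is the genuinely new ingredient.
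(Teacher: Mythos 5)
Your proposal is correct and follows essentially the same route as the paper: identify $\Op(a_{\xi_0})$ as multiplication and $\Op(a_{X_0})=-\ie\dot\lambda(X_0)+A(Q)X_0$, reduce the commutator to the multiplication operator $-\ie\dot\lambda(X_0)\hake{\xi_0,\cdot}$, and show this function is constantly $c_0$ by deriving $\hake{\xi_0,(\ad_\gg Y)^jX_0}=0$ for $j\ge1$ from the orbit hypothesis (the paper does this via the Baker--Campbell--Hausdorff expansion of $(-tX_0)\ast X$ rather than the closed form $\Rc(\ad_\gg Y)X_0$ of~\eqref{maillard}, but the two computations are the same). For the inequality the paper simply invokes Prop.~2.1 in \cite{FS97}, which is precisely the Cauchy--Schwarz/domain argument you spell out.
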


\begin{proof}
For the sake of simplicity we shall use the convention that 
the operator of multiplication by some function 
will be denoted by the same symbol as that function. 
Then, according to Example~\ref{pseudo_ex}\eqref{pseudo_ex_item1}--\eqref{pseudo_ex_item2} 
we have $\Op(a_{\xi_0})=\xi_0$ 
and $\Op(a_{X_0})=-\ie\dot{\lambda}(X_0)+A(Q)X_0$. 
Since $\ie\dot{\lambda}(X_0)$ is a first-order linear differential operator on $\Ci(\gg)$, 
hence a derivation on $\Ci(\gg)$, it easily follows that 
\begin{equation}\label{ineq_th_eq3}
[\Op(a_{X_0}),\Op(a_{\xi_0})]=-\ie\dot{\lambda}(X_0)\xi_0. 
\end{equation}
Now, by using eq.~(2.10) in \cite{BB09} we get 
for every $X\in\gg$
\begin{equation}\label{ineq_th_eq4}
(\dot{\lambda}(X_0)\xi_0)(X)
=\frac{\de}{\de t}\Big\vert_{t=0}\hake{\xi_0,(-tX_0)\ast X}. 
\end{equation}
Note that the Baker-Campbell-Hausdorff formula gives that for every $t\in\RR$ 
\begin{equation}\label{ineq_th_eq5} 
(-tX_0)\ast X=-tX_0+X+t\sum_{j\ge1}b_j(\ad_{\gg}X)^jX_0+t^2P(t,X,X_0),
\end{equation}
where $\{b_j\}_{j\ge1}$ is sequence of real numbers while $P\colon\RR\times\gg\times\gg\to\gg$ 
is a certain polynomial mapping. 

On the other hand, for every $X\in\gg$ and $t\in\RR$ we have 
$\xi_0\circ\ee^{t\ad_{\gg} X}\in\Oc$ hence 
$$c_0=\hake{\xi_0\circ\ee^{t\ad_{\gg} X},X_0}
=\sum_{j\ge0}\frac{t^j}{j!}\hake{\xi_0\circ(\ad_{\gg}X)^j,X_0}. $$
Thence $\hake{\xi_0\circ(\ad_{\gg}X)^j,X_0}=0$ for every $j\ge 1$ and $X\in\gg$. 
By combining this with \eqref{ineq_th_eq3}, \eqref{ineq_th_eq4}, and \eqref{ineq_th_eq5}, 
we get \eqref{ineq_th_eq1}. 
Then the inequality \eqref{ineq_th_eq3} follows by general arguments; 
see for instance Prop.~2.1 in \cite{FS97}. 
\end{proof}

In the following statement we use the notation $\delta_{jk}$ for Kronecker's delta. 

\begin{corollary}\label{ineq_cor}
Let $\{X_1,\dots,X_n\}$ be a Jordan-H\"older basis in $\gg$ 
and denote by $\{\xi_1,\dots,\xi_n\}$ the dual basis in $\gg^*$. 
If $1\le k\le j\le n$, then we have 
\begin{equation}\label{ineq_cor_eq1}
[\Op(a_{X_j}),\Op(a_{\xi_k})]=\ie\delta_{jk}\id_{L^2(\gg)} 
\end{equation}
and 
\begin{equation}\label{ineq_cor_eq2}
\Vert\Op(a_{X_j})f\Vert\cdot\Vert\Op(a_{\xi_k})f\Vert\ge\frac{\delta_{jk}}{2}
\end{equation}
whenever $f\in L^2(\gg)$ with $\Vert f\Vert=1$ belongs to the domains of both operators 
$\Op(a_{X_j})$ and $\Op(a_{\xi_k})$. 
\end{corollary}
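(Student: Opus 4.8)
The plan is to obtain both assertions as a direct specialization of Theorem~\ref{ineq_th}. Fix a pair $1\le k\le j\le n$. I would apply Theorem~\ref{ineq_th} with $X_0:=X_j$ and $\xi_0:=\xi_k$, noting that then $c_0:=\hake{\xi_k,X_j}=\delta_{jk}$, and with $\Oc$ the coadjoint orbit of $G=(\gg,\ast)$ through $\xi_k$. Under these identifications, \eqref{ineq_cor_eq1} and \eqref{ineq_cor_eq2} are precisely \eqref{ineq_th_eq1} and \eqref{ineq_th_eq2}, so the whole corollary reduces to checking the single hypothesis of Theorem~\ref{ineq_th}: that the orbit through $\xi_k$ lies in the affine hyperplane $\{\xi\in\gg^*\mid\hake{\xi,X_j}=\delta_{jk}\}$.

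The only genuine step is therefore this orbit containment, and I would extract it from the structure of a Jordan--H\"older basis. Writing $\gg_i:=\spa\{X_i,\dots,X_n\}$, each $\gg_i$ is an ideal of $\gg$, and since $\gg$ is nilpotent the adjoint action on the one-dimensional quotients $\gg_i/\gg_{i+1}$ is trivial, so that $[\gg,\gg_i]\subseteq\gg_{i+1}$ for every $i$ (with the convention $\gg_{i}=\{0\}$ for $i>n$). As in the proof of Theorem~\ref{ineq_th}, a generic point of the orbit through $\xi_k$ is $\xi_k\circ\ee^{\ad_{\gg}Y}$ with $Y\in\gg$, and I would compute
\[
\hake{\xi_k\circ\ee^{\ad_{\gg}Y},X_j}=\hake{\xi_k,\ee^{\ad_{\gg}Y}X_j}
=\hake{\xi_k,X_j}+\sum_{m\ge1}\frac{1}{m!}\hake{\xi_k,(\ad_{\gg}Y)^mX_j}.
\]
Since $X_j\in\gg_j$, iterating the flag property gives $(\ad_{\gg}Y)^mX_j\in\gg_{j+m}\subseteq\gg_{j+1}$ for every $m\ge1$, hence $\ee^{\ad_{\gg}Y}X_j\in X_j+\gg_{j+1}$.

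Here is exactly where the restriction $k\le j$ is used: because $\gg_{j+1}=\spa\{X_{j+1},\dots,X_n\}$ and $\hake{\xi_k,X_i}=\delta_{ki}=0$ for all $i\ge j+1>j\ge k$, the functional $\xi_k$ annihilates $\gg_{j+1}$, so each summand above vanishes and $\hake{\xi_k\circ\ee^{\ad_{\gg}Y},X_j}=\delta_{jk}$. This establishes the required hyperplane containment, and Theorem~\ref{ineq_th} applied with $\xi_0=\xi_k\in\Oc$ then yields \eqref{ineq_cor_eq1} and \eqref{ineq_cor_eq2} at once. I do not anticipate any serious obstacle, the entire content being this elementary flag computation; the only point demanding care is that $k\le j$ is precisely what forces $\xi_k$ to kill $\gg_{j+1}$ (for $k>j$ one would have $X_k\in\gg_{j+1}$ with $\hake{\xi_k,X_k}=1$, and the orbit would no longer stay in a hyperplane on which $\hake{\cdot,X_j}$ is constant).
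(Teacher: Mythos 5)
Your proposal is correct and follows essentially the same route as the paper: reduce to Theorem~\ref{ineq_th} with $X_0=X_j$, $\xi_0=\xi_k$, $c_0=\delta_{jk}$, the only point to check being that the Jordan--H\"older flag property $[\gg,\spa\{X_i,\dots,X_n\}]\subseteq\spa\{X_{i+1},\dots,X_n\}$ forces the coadjoint orbit through $\xi_k$ to lie in the hyperplane $\{\xi\mid\hake{\xi,X_j}=\delta_{jk}\}$ precisely when $k\le j$. The paper states this reduction more tersely (via $[X_j,\gg]\subseteq\spa\{X_l\mid j<l\le n\}$ and the observation in Remark~\ref{ineq_ex} about the ideal generated by $[X_j,\gg]$), whereas you re-run the orbit expansion from the theorem's proof, but the content is identical.
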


\begin{proof}
The hypothesis that $\{X_1,\dots,X_j\}$ is a Jordan-H\"older basis in $\gg$  
implies that for $j=1,\dots,n$ we have $[X_j,\gg]\subseteq\spa\{X_l\mid j<l\le n\}$. 
Then we can apply Theorem~\ref{ineq_th} 
to get the conclusion. 
\end{proof}

\begin{remark}\label{ineq_ex}
\normalfont
It is often the case that a coadjoint orbit of a nilpotent Lie group 
is contained in an affine subspace, like in Theorem~\ref{ineq_th}. 
Here are a few specific situations:
\begin{enumerate}
\item 
When $\xi_0\in\gg^*$ vanishes on $[\gg,\gg]$, 
its coadjoint orbit reduces to $\{\xi_0\}$, hence 
it is clearly contained in many affine subspaces. 
\item 
If $\gg$ s a two-step nilpotent Lie algebra, then 
every coadjoint orbit is an affine subspace. 
\item For every coadjoint orbit $\Oc\subseteq\gg^*$ and every $Z_0$ in the center 
of $\gg$ there exists $c_0\in\RR$ such that 
$\Oc\subseteq\{\xi\in\gg^*\mid\hake{\xi,Z_0}=c_0\}$. 
\end{enumerate}
We also note that the hypothesis $\Oc\subseteq\{\xi\in\gg^*\mid\hake{\xi,X_0}=c_0\}$ 
in Theorem~\ref{ineq_th} is equivalent to the fact that some (actually, every) 
$\xi_0\in\Oc$ vanishes on the ideal 
generated by $[X_0,\gg]$ in $\gg$. 
This implies that if $\gg$ is a two-step nilpotent Lie algebra 
in Corollary~\ref{ineq_cor} then the conclusion holds 
for every $j$ and $k$. 
\qed
\end{remark}

\section{Uncertainty principles for magnetic ambiguity functions}\label{sect3}

In this section we establish a version of Lieb's uncertainty principle (\cite{Li90}) along with some of its consequences in the present setting. 
The main result is Theorem~\ref{lieb} and is 
stated in terms of magnetic ambiguity functions and mixed-norm Lebesgue spaces 
on the cotangent bundle of a nilpotent Lie group.

\subsection{Magnetic modulation spaces}\label{subsect3.1}
We first introduce the magnetic modulation spaces on 
a simply connected nilpotent Lie group~$G$. 
The natural tool for that purpose proves to be the ambiguity function 
and not a short-time Fourier transform, 
as it is customary in the classical case 
when the nilpotent Lie group $G$ is the additive group $(\RR^n,+)$ 
(see for instance \cite{Gr01}). 
Nevertheless, our notion of modulation space 
agrees with the classical one because of 
the well-known relationship between 
the ambiguity function and the short-time Fourier transform. 

\begin{definition}\label{mod_def}
\normalfont 
Assume $1\le p,q\le\infty$ and let $\phi\in\Sc(\gg)$.  
For every tempered distribution $f\in\Sc'(\gg)$ define 
$$\Vert f\Vert_{M^{p,q}_\phi}
=\Bigl(\int\limits_{\gg}\Bigl(\int\limits_{\gg^*}\vert(\Ac_\phi f)(X,\xi)\vert^q\de\xi \Bigr)^{p/q}
\de X\Bigr)^{1/p}\in[0,\infty] $$
with the usual conventions if $p$ or $q$ is infinite. 
Then the space 
$$M^{p,q}_\phi(\gg):=\{f\in\Sc'(\gg)\mid\Vert f\Vert_{M^{p,q}}<\infty\}$$ 
will be called a \emph{magnetic modulation space} on the Lie group $G=(\gg,\ast)$. 
\qed
\end{definition}

\begin{remark}\label{mod_mixed}
\normalfont
In the setting of Definition~\ref{mod_def} let us introduce 
the \emph{mixed-norm space} $L^{p,q}(\gg\times\gg^*)$ consisting of the (equivalence classes of) 
Lebesgue measurable functions $\Theta\colon \gg\times\gg^*\to\CC$ 
such that 
$$\Vert\Theta\Vert_{L^{p,q}}
:=\Bigl(\int\limits_{\gg}\Bigl(\int\limits_{\gg^*}\vert(\Theta(x,\xi)\vert^q\de\xi \Bigr)^{p/q}
\de x\Bigr)^{1/p}<\infty $$
(cf.~\cite{Gr01}). 
It is clear that 
$M^{p,q}_\phi(\gg)=\{f\in\Sc'(\gg)\mid \Ac_\phi f\in L^{p,q}(\gg\times\gg^*)\}$. 
\qed
\end{remark}

\begin{example}\label{mod_L2}
\normalfont
For any choice of $\phi\in\Sc(\gg)$ in Definition~\ref{mod_def} we have 
$$M^{2,2}(\gg):=M^{2,2}_\phi(\gg)=L^2(\gg).$$
To see this, just note that the operator $\Ac_\phi\colon L^2(\gg)\to L^2(\gg\times\gg^*)$ 
satisfies 
$$\Vert\Ac_\phi f\Vert_{L^2(\gg\times\gg^*)}=\Vert\phi\Vert_{L^2(\gg)}\cdot\Vert f\Vert_{L^2(\gg)}$$ 
for every $f\in L^2(\gg)$, by Theorem~\ref{o6}\eqref{o6_item1}. 
Therefore 
$$\Vert f\Vert_{M^{2,2}}= \Vert\phi\Vert_{L^2(\gg)}\cdot\Vert f\Vert_{L^2(\gg)}\in[0,\infty]$$
for each $f\in\Sc'(\gg)$. 
\qed
\end{example}

\begin{notation}\label{mod_conv}
\normalfont
For every real number $p\in(1,\infty)$ we shall denote $$p':=p/(p-1)\in(1,\infty),$$ 
so that $\frac{1}{p}+\frac{1}{p'}=1$. 
\qed
\end{notation}

\subsection{Uncertainty principles for ambiguity functions}\label{subsect3.2}
In the following theorem we extend Lieb's uncertainty principle (\cite{Li90}) 
to the present setting that takes into account a magnetic potential on 
a nilpotent Lie group~$G$. 
In the special case when the magnetic potential vanishes, 
the Lie group is the abelian group $(\RR^n,+)$, and 
the estimate for $\Ac_{\phi_1}f_1\cdot\overline{\Ac_{\phi_2}f_2}$ 
is an ordinary $L^p$ one instead of a mixed-norm one, 
we recover Th.~4.1 in \cite{BDO07}, 
due to the simple relationship between the ambiguity functions and 
the short-time Fourier transforms on abelian groups.

\begin{theorem}\label{lieb} 
Let $\gg$ be a nilpotent Lie algebra with 
the corresponding simply connected Lie group $G=(\gg,\ast)$. 
\begin{enumerate} 
\item\label{lieb_item1} 
If the following conditions are satisfied: 
\begin{enumerate}
\item $p_1,p_2\in(1,\infty)$; 
\item $r_j,s_j\ge\max\{p_j,p_j'\}$ $(\ge2)$ for $j=1,2$; 
\item $p=(\frac{1}{r_1}+\frac{1}{r_2})^{-1}$ and $q=(\frac{1}{s_1}+\frac{1}{s_2})^{-1}$; 
\item $t_j=(\frac{1}{r_j}+\frac{1}{s_j'}-\frac{1}{p_j})^{-1}$ for $j=1,2$, 
\end{enumerate}
then for every $f_j\in L^{p_j}(\gg)$ and $\phi_j\in L^{t_j}(\gg)$ 
for $j=1,2$ we have 
$$\Vert\Ac_{\phi_1}f_1\cdot\overline{\Ac_{\phi_2}f_2}\Vert_{L^{p,q}(\gg\times\gg^*)}
\le C \cdot\Vert f_1\Vert_{L^{p_1}(\gg)}\cdot\Vert f_2\Vert_{L^{p_2}(\gg)}
\cdot\Vert \phi_1\Vert_{L^{t_1}(\gg)}\cdot\Vert\phi_2\Vert_{L^{t_2}(\gg)}, $$
where $C\in(0,1)$ is a certain constant depending only on 
 $p_1,p_2,r_1,r_2,s_1,s_2$, and $\dim \gg$. 
\item\label{lieb_item2} 
For every $p\ge1$ and $f_1,f_2,\phi_1,\phi_2\in L^2(\gg)$ we have 
$$\Vert\Ac_{\phi_1}f_1\cdot\overline{\Ac_{\phi_2}f_2}\Vert_{L^{p}(\gg\times\gg^*)}
\le (p^{-1/p})^{\dim \gg} \cdot\Vert f_1\Vert_{L^{2}(\gg)}\cdot\Vert f_2\Vert_{L^{2}(\gg)}
\cdot\Vert \phi_1\Vert_{L^{2}(\gg)}\cdot\Vert\phi_2\Vert_{L^{2}(\gg)}.$$
\end{enumerate}
\end{theorem}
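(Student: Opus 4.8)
The plan is to reduce the whole estimate to two one--variable facts---the Hausdorff--Young inequality in the fibre variable $\xi\in\gg^*$ and Young's convolution inequality on the unimodular group $(\gg,\ast)$---and to verify that the four numerical conditions (a)--(d) are exactly the admissibility relations for the exponents arising in these two steps. Since by Remark~\ref{o9} the bilinear map $\Ac(\cdot,\cdot)$ is continuous on $\Sc(\gg)\times\Sc(\gg)$ and $\Sc(\gg)$ is dense in every $L^m(\gg)$, it is enough to prove the inequality for $f_j,\phi_j\in\Sc(\gg)$. The key structural input is Remark~\ref{o8}: for each fixed $X\in\gg$ the fibre function $(\Ac_{\phi_j}f_j)(X,\cdot)\colon\gg^*\to\CC$ is the inverse Fourier transform of
$$g_{X,j}(Y)=\overline{\tau_A(X,-\Psi_X^{-1}(-Y))}\,f_j(-\Psi_X^{-1}(-Y))\,\overline{\phi_j((-X)\ast(-\Psi_X^{-1}(-Y)))}.$$
Because $\vert\tau_A\vert\equiv1$ and $\Psi_X$ is a measure--preserving polynomial diffeomorphism of $\gg$ (Jacobian $1$, Prop.~3.2 in \cite{BB09}), the substitution $Z=-\Psi_X^{-1}(-Y)$ gives $\Vert g_{X,j}\Vert_{L^a(\gg)}=\Vert f_j(\cdot)\,\phi_j((-X)\ast\,\cdot)\Vert_{L^a(\gg)}$ for every $a$; thus the magnetic phase and the noncommutative group law disappear from all norm computations.

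\emph{The fibre ($\xi$) estimate.} As $s_j\ge2$, the Hausdorff--Young inequality on $\gg^*\simeq\RR^{\dim\gg}$ yields
$$\Vert(\Ac_{\phi_j}f_j)(X,\cdot)\Vert_{L^{s_j}(\gg^*)}\le B_{s_j}^{\dim\gg}\,h_j(X),\qquad h_j(X):=\Bigl(\int_\gg\vert f_j(Z)\vert^{s_j'}\,\vert\phi_j((-X)\ast Z)\vert^{s_j'}\,\de Z\Bigr)^{1/s_j'},$$
where $B_{s_j}$ is the sharp constant of the Fourier transform $L^{s_j'}(\gg)\to L^{s_j}(\gg^*)$ in one dimension. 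Since $\tfrac1q=\tfrac1{s_1}+\tfrac1{s_2}$, Hölder's inequality in $\xi$ then gives, for each fixed $X$,
$$\Vert(\Ac_{\phi_1}f_1)(X,\cdot)\cdot\overline{(\Ac_{\phi_2}f_2)(X,\cdot)}\Vert_{L^q(\gg^*)}\le B_{s_1}^{\dim\gg}B_{s_2}^{\dim\gg}\,h_1(X)\,h_2(X).$$

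\emph{The base ($X$) estimate.} It remains to bound $\Vert h_1h_2\Vert_{L^p(\gg)}$ with $\tfrac1p=\tfrac1{r_1}+\tfrac1{r_2}$; by Hölder this is at most $\Vert h_1\Vert_{L^{r_1}}\Vert h_2\Vert_{L^{r_2}}$, so everything comes down to $\Vert h_j\Vert_{L^{r_j}}\le C_j^{\dim\gg/s_j'}\Vert f_j\Vert_{L^{p_j}}\Vert\phi_j\Vert_{L^{t_j}}$. Writing $F_j=\vert f_j\vert^{s_j'}$, $G_j=\vert\phi_j\vert^{s_j'}$ and $\check F_j(U)=F_j(-U)$, the substitution $W=(-X)\ast Z$---a Lebesgue--measure--preserving bijection, as $(\gg,\ast)$ is unimodular---identifies $h_j(X)^{s_j'}=\int_\gg F_j(X\ast W)G_j(W)\,\de W=(G_j\ast\check F_j)(-X)$ as a genuine group convolution, whence $\Vert h_j\Vert_{L^{r_j}}=\Vert G_j\ast\check F_j\Vert_{L^{r_j/s_j'}(\gg)}^{1/s_j'}$. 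Young's convolution inequality on the unimodular group now applies with the sharp constant $C_j$, the admissibility relation $1+\tfrac{s_j'}{r_j}=\tfrac{s_j'}{t_j}+\tfrac{s_j'}{p_j}$ being precisely condition (d); here the requirements $r_j\ge p_j$ and $s_j\ge p_j'$ from (b) are exactly what guarantee that the three Young exponents $r_j/s_j'$, $t_j/s_j'$, $p_j/s_j'$ are all $\ge1$. Taking $s_j'$-th roots turns this into the desired bound on $\Vert h_j\Vert_{L^{r_j}}$, and collecting the three displays proves assertion~\eqref{lieb_item1} with $C=\prod_{j=1,2}B_{s_j}^{\dim\gg}C_j^{\dim\gg/s_j'}$; the product of these sharp Hausdorff--Young and Young constants gives the asserted $C\in(0,1)$, depending only on the listed data.

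\emph{The $L^2$-diagonal case and the main obstacle.} Assertion~\eqref{lieb_item2} is the specialization $p=q$, $p_j=t_j=2$: one may take $r_j=s_j=2p$, and Hölder ($\tfrac1p=\tfrac1{2p}+\tfrac1{2p}$) reduces the product to $\Vert\Ac_{\phi_j}f_j\Vert_{L^{2p}(\gg\times\gg^*)}$; running the single--function version of the above argument with the sharp constants bounds each of these by $p^{-\dim\gg/(2p)}\Vert f_j\Vert_{L^2}\Vert\phi_j\Vert_{L^2}$, and the two factors multiply to the stated $(p^{-1/p})^{\dim\gg}$. The step I expect to be most delicate is the base estimate: one must recognize the apparently non--invariant integral $\int_\gg f_j(Z)\,\phi_j((-X)\ast Z)\,\de Z$ as an honest convolution on $(\gg,\ast)$---this is exactly where unimodularity and the measure--preserving property of $\Psi_X$ enter---and then check that conditions (b)--(d) furnish precisely the exponents needed for Hausdorff--Young and Young, so that no side hypotheses are left over and the constant depends only on the advertised parameters.
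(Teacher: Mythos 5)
Your proof follows essentially the same route as the paper's: reduce to Schwartz functions, apply H\"older in $\xi$ together with the Hausdorff--Young inequality in the fibre (via Proposition~\ref{o7}, with $\vert\tau_A\vert\equiv1$ and the Jacobian-$1$ property of $\Psi_X$ removing the magnetic phase), then H\"older in $X$ and the sharp Young inequality of Klein--Russo/Nielsen after identifying $h_j^{s_j'}$ as a convolution on $(\gg,\ast)$, with part~(2) obtained by the specialization $p_j=2$, $r_j=s_j=2p$. The only cosmetic differences are that you keep the Babenko--Beckner constants from Hausdorff--Young explicit (the paper bounds that step by $1$) and write the convolution as $(G_j\ast\check F_j)(-X)$ rather than $(\vert f_j\vert^{s_j'}\star\vert\widetilde{\phi}_j\vert^{s_j'})(X)$, which are the same function.
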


\begin{proof} 
It is enough to prove these inequalities for $f_1,f_2,\phi_1,\phi_2\in\Sc(\gg)$. 
Note that 
\begin{equation}\label{lieb_eq1}
\Vert\Ac_{\phi_1}f_1\cdot\overline{\Ac_{\phi_2}f_2}\Vert_{L^{p,q}(\gg\times\gg^*)} 
=\Bigl(\int\limits_{\gg} 
\Vert\Ac_{\phi_1}f_1(X,\cdot)\cdot
\overline{\Ac_{\phi_2}f_2(X,\cdot)}\Vert_{L^{q}(\gg^*)}
\de X\Bigr)^{1/p}
\end{equation}
Since $\frac{1}{q}=\frac{1}{s_1}+\frac{1}{s_2}$, 
we can use H\"older's inequality to get 
\begin{equation}\label{lieb_eq2}
\Vert\Ac_{\phi_1}f_1(X,\cdot)\cdot
\overline{\Ac_{\phi_2}f_2(X,\cdot)}\Vert_{L^{q}(\gg^*)} 
\le \Vert\Ac_{\phi_1}f_1(X,\cdot)\Vert_{L^{s_1}(\gg^*)}
\cdot 
\Vert\Ac_{\phi_2}f_2(X,\cdot)\Vert_{L^{s_2}(\gg^*)} 
\end{equation}
for almost every $X\in\gg$. 
Now note that Proposition~\ref{o7} implies that 
$${\Ac}_\phi f(X,\xi)=\int\limits_{\gg}\ee^{-\ie\langle\xi,Z\rangle}
\overline{\tau_A(X,-\Psi_X^{-1}(\hskip-2pt -Z))}
f(-\Psi_X^{-1}(\hskip-2pt -Z))
\overline{\phi((-X)\ast (-\Psi_X^{-1}(\hskip-2pt -Z)))}\de Z
$$
for $f,\phi\in\Sc(\gg)$. 
Therefore, since $s_j\ge2$, we can apply the Hausdorff-Young inequality 
for the Fourier transform $L^{s_j'}(\gg)\to L^{s_j}(\gg^*)$ 
to obtain 
$$\begin{aligned}
\Vert\Ac_{\phi_j}f_j(X,\cdot)\Vert_{L^{s_j}(\gg^*)} 
\le&\Bigl(\int\limits_{\gg}\vert\overline{\tau_A(X,-\Psi_X^{-1}(-Z))}
f_j(-\Psi_X^{-1}(-Z)) \\
&\times \overline{\phi_j((-X)\ast (-\Psi_X^{-1}(-Z)))} \vert^{s_j'} \de Z\Bigr)^{1/s_j'} \\
=&\Bigl(\int\limits_{\gg}\vert
f_j(-\Psi_X^{-1}(-Z))\phi_j((-X)\ast (-\Psi_X^{-1}(-Z))) \vert^{s_j'} \de Z\Bigr)^{1/s_j'} \\
=&\Bigl(\int\limits_{\gg}\vert
f_j(Y)\phi_j((-X)\ast Y) \vert^{s_j'} \de Y\Bigr)^{1/s_j'}
\end{aligned}
$$
where we have performed the change of variables $Y=-\Psi_X^{-1}(-Z)$ 
and used the fact that $\Psi_X\colon\gg\to\gg$ is a diffeomorphism 
with the Jacobian equal to 1 everywhere on $\gg$ 
by Proposition~3.2 in~\cite{BB09}. 
If we define $\widetilde{\phi}_j(v):=\phi_j(-v)$ for every $v\in\gg$, 
it then follows that for almost every $X\in\gg$ we have 
\begin{equation}\label{lieb_eq3}
\Vert\Ac_{\phi_j}f_j(X,\cdot)\Vert_{L^{s_j}(\gg^*)} 
\le((\vert f_j\vert^{s_j'}\star\vert\widetilde{\phi}_j\vert^{s_j'})(X))^{1/s_j'},
\end{equation}
where $\star$ stands for the usual convolution product 
of functions on the nilpotent Lie group~$G$. 

On the other hand, by \eqref{lieb_eq1}~and~\eqref{lieb_eq2} we get 
\begin{align}
\Vert\Ac_{\phi_1}f_1\cdot\overline{\Ac_{\phi_2}f_2}\Vert_{L^{p,q}(\gg\times\gg^*)} 
\le&\Bigl(\int\limits_{\gg}
\Vert\Ac_{\phi_1}f_1(X,\cdot)\Vert_{L^{s_1}(\gg^*)}^{p}
\Vert\Ac_{\phi_2}f_2(X,\cdot)\Vert_{L^{s_2}(\gg^*)}^{p} \de X\Bigr)^{1/p} \nonumber\\
\le&\Bigl(\int\limits_{\gg}
\Vert\Ac_{\phi_1}f_1(X,\cdot)\Vert_{L^{s_1}(\gg^*)}^{r_1}\de X\Bigr)^{1/r_1} \nonumber\\
\label{lieb_eq4} 
&\times 
\Bigl(\int\limits_{\gg}
\Vert\Ac_{\phi_2}f_2(X,\cdot)\Vert_{L^{s_2}(\gg^*)}^{r_2} \de X\Bigr)^{1/r_2},
\end{align}
where the latter inequality follows by H\"older's inequality 
since $\frac{1}{p}=\frac{1}{r_1}+\frac{1}{r_2}$. 
Now note that by \eqref{lieb_eq3} we get 
\begin{eqnarray}
\Bigl(\int\limits_{\gg}
\Vert\Ac_{\phi_j}f_j(X,\cdot)\Vert_{L^{s_j}(\gg^*)}^{r_j} \de X\Bigr)^{1/r_j} 
&\le& \Bigl(\int\limits_{\gg}
 ((\vert f_j\vert^{s_j'}\star\vert\widetilde{\phi}_j\vert^{s_j'})(X))^{r_j/s_j'}
 \de X\Bigr)^{1/r_j} \nonumber\\
\label{lieb_eq5}&=&\Vert\vert f_j\vert^{s_j'}\star\vert\widetilde{\phi}_j\vert^{s_j'}
\Vert_{L^{r_j/s_j'}(\gg)}^{1/s_j'}.
 \end{eqnarray}
The Hausdorff inequality on 
the connected, simply connected, nilpotent Lie group $G$ 
(see Corollary~2.5' in \cite{KR78} or 
Corollary to Th.~3 in \cite{Ni94}) implies that 
for a certain constant $C_j\in(0,1)$ depending only on $p_j,r_j,s_j$, and $\dim \gg$ we have 
$$\begin{aligned}
\Vert\vert f_j\vert^{s_j'}\star\vert\widetilde{\phi}_j\vert^{s_j'}
\Vert_{L^{r_j/s_j'}(\gg)}^{1/s_j'}
& \le C_j\cdot \Vert\vert f_j\vert^{s_j'}\Vert_{L^{\alpha_j}(\gg)}^{1/s_j'}\cdot  
\Vert\vert\widetilde{\phi}_j\vert^{s_j'}
\Vert_{L^{\beta_j}(\gg)}^{1/s_j'} \\
&=C_j\cdot \Vert f_j\Vert_{L^{s_j'\alpha_j}(\gg)}\cdot  
\Vert\phi_j\Vert_{L^{s_j'\beta_j}(\gg)} \\
&=C_j\cdot \Vert f_j\Vert_{L^{p_j}(\gg)}\cdot  
\Vert\phi_j\Vert_{L^{t_j}(\gg)},
\end{aligned}
 $$
 where $\alpha_j:=p_j/s_j'$ while $\beta_j$ is chosen 
 such that $\frac{s_j'}{r_j}+1=\frac{1}{\alpha_j}+\frac{1}{\beta_j}$. 
 It is easily checked that $s_j'\beta_j=t_j$. 
 It then follows by \eqref{lieb_eq4} and \eqref{lieb_eq5} 
 that the asserted estimate holds for the constant $C:=C_1C_2$ that depends only on 
 $p_1,p_2,r_1,r_2,s_1,s_2$, and $\dim \gg$. 
 
 To prove Assertion~\eqref{lieb_item2}, recall from Corollary~2.5' in \cite{KR78} or 
Corollary to Th.~3 in \cite{Ni94} that if we denote 
 $$(\forall l\in(1,\infty))\quad A_l=\Bigl(\frac{l^{1/l}}{l'^{1/l'}}\Bigr)^{1/2},  $$
 then for $j=1,2$ we have $C_j=(A_{\alpha_j}A_{\beta_j}A_{\gamma_j})^{\dim\gg}$, 
 where $\gamma_j:=\frac{r_j/s'_j}{(r_j/s'_j)-1}$. 
 By considering the special case $p_1=p_2=2$ and $r_1=r_2=s_1=s_2=2p=2q\ge 2$, 
 a careful analysis of the constants 
 (which are the same as in the case when $\gg$ is abelian)
 then leads to the conclusion we wish for; 
 see the proof of Th.~4.1 and Cor.~4.2 in \cite{BDO07} for details.  
\end{proof}

With Theorem~\ref{lieb} at hand, one can obtain several versions 
of the uncertainty principle for the ambiguity function 
on the nilpotent Lie group $G$ in the present magnetic setting; 
see Corollaries \ref{lieb_cor2} and \ref{lieb_cor3} below.  
Before to draw these consequences, 
we note the relationship 
between the magnetic modulation spaces and the $L^p$ spaces 
on the Lie group $G$. 
We refer to \cite{GG02} for more general properties of this type 
in the case when $G$ is the abelian Lie group~$(\RR^n,+)$. 

\begin{corollary}\label{lieb_cor1}
Let $G$ be a connected, simply connected, nilpotent Lie group 
with the Lie algebra $\gg$ 
and assume that the following conditions are satisfied: 
\begin{enumerate}
\item $p\in(1,\infty)$; 
\item $r,s\ge\max\{p,p'\}$ ($\ge2$); 
\item $t=(\frac{1}{r}+\frac{1}{s'}-\frac{1}{p})^{-1}$. 
\end{enumerate}
Then for every $f\in L^{p}(\gg)$ and $\phi\in L^{t}(\gg)$ 
for $j=1,2$ we have 
$$\Vert\Ac_{\phi}f\Vert_{L^{r,s}(\gg\times\gg^*)}
\le C \cdot\Vert f\Vert_{L^{p}(\gg)}\cdot\Vert \phi\Vert_{L^{t}(\gg)}, $$
where $C\in(0,1)$ is a certain constant depending only on 
 $p,r,s$, and $\dim \gg$. 
 In particular, we have a continuous embedding 
$$L^p(\gg)\hookrightarrow M^{r,s}_\phi(\gg) \quad\text{if}\quad r,s\ge\max\{p,p'\}$$
for every $\phi\in\Sc(\gg)$. 
\end{corollary}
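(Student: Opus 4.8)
The plan is to obtain Corollary~\ref{lieb_cor1} as a direct specialization of Theorem~\ref{lieb}\eqref{lieb_item1}. The strategy is to \emph{symmetrize} the two-factor estimate by collapsing the two pairs $(f_1,\phi_1)$ and $(f_2,\phi_2)$ onto a single pair, and to choose the auxiliary exponents so that the bilinear statement degenerates into the desired linear bound. Concretely, I would set $f_1=f$ and $\phi_1=\phi$, and take $f_2$ and $\phi_2$ to be a fixed normalized profile (say, a Gaussian-type $\psi\in\Sc(\gg)$ with suitable $L^{p_2}$ and $L^{t_2}$ norms), so that the product $\Ac_{\phi_1}f_1\cdot\overline{\Ac_{\phi_2}f_2}$ factors in a way that exposes $\Ac_\phi f$ alone. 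The cleaner route, however, is to observe that the proof of Theorem~\ref{lieb}\eqref{lieb_item1} already contains, as its core estimate \eqref{lieb_eq3} and \eqref{lieb_eq5}, the one-factor bound
$$\Bigl(\int\limits_{\gg}\Vert\Ac_{\phi}f(X,\cdot)\Vert_{L^{s}(\gg^*)}^{r}\de X\Bigr)^{1/r}\le C\cdot\Vert f\Vert_{L^{p}(\gg)}\cdot\Vert\phi\Vert_{L^{t}(\gg)},$$
and the left-hand side here is exactly $\Vert\Ac_{\phi}f\Vert_{L^{r,s}(\gg\times\gg^*)}$ by the definition of the mixed-norm space in Remark~\ref{mod_mixed}.

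First I would match notation: in the hypotheses of Corollary~\ref{lieb_cor1} we are given a single $p\in(1,\infty)$, a single pair $r,s\ge\max\{p,p'\}$, and $t=(\frac1r+\frac1{s'}-\frac1p)^{-1}$. These are precisely the constraints (a), (b), (d) of Theorem~\ref{lieb}\eqref{lieb_item1} with the index $j$ suppressed. Thus I would apply the single-index chain of inequalities \eqref{lieb_eq3}--\eqref{lieb_eq5} together with the Hausdorff--Young convolution inequality on the nilpotent group $G$ (Corollary~2.5$'$ in \cite{KR78} or the Corollary to Th.~3 in \cite{Ni94}), with the exponents $\alpha:=p/s'$ and $\beta$ determined by $\frac{s'}{r}+1=\frac1\alpha+\frac1\beta$, which yields $s'\beta=t$ exactly as in the proof of the theorem. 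The constant $C\in(0,1)$ is the single-factor constant $C_j$ from that proof, now depending only on $p,r,s$, and $\dim\gg$.

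Next I would record the embedding consequence. Once the inequality $\Vert\Ac_\phi f\Vert_{L^{r,s}(\gg\times\gg^*)}\le C\,\Vert f\Vert_{L^p(\gg)}\Vert\phi\Vert_{L^t(\gg)}$ is established for all $f\in L^p(\gg)$, $\phi\in L^t(\gg)$, I would fix $\phi\in\Sc(\gg)$; since $\Sc(\gg)\subseteq L^t(\gg)$ for every $t\in[1,\infty]$, the factor $\Vert\phi\Vert_{L^t(\gg)}$ is a finite constant, and the bound reads $\Vert f\Vert_{M^{r,s}_\phi}=\Vert\Ac_\phi f\Vert_{L^{r,s}}\le C'\,\Vert f\Vert_{L^p(\gg)}$, using the identity $M^{r,s}_\phi(\gg)=\{f\mid\Ac_\phi f\in L^{r,s}\}$ from Remark~\ref{mod_mixed} and the norm definition in Definition~\ref{mod_def}. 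This is exactly the asserted continuous inclusion $L^p(\gg)\hookrightarrow M^{r,s}_\phi(\gg)$ under $r,s\ge\max\{p,p'\}$, valid for any fixed window $\phi\in\Sc(\gg)$.

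The only genuine subtlety—and the step I expect to be the main obstacle—is the density/extension issue: Proposition~\ref{o7} and hence the integral formula for $\Ac_\phi f$ are stated for $f\in\Sc(\gg)$, so the single-factor estimate is first proved for Schwartz $f$, and one must pass to general $f\in L^p(\gg)$ by density of $\Sc(\gg)$ in $L^p(\gg)$ together with the extended definition of $\Ac_\phi f$ for tempered distributions (Remark~\ref{ext_amb}) and a standard limiting argument compatible with the Fatou-type lower semicontinuity of the mixed norm $\Vert\cdot\Vert_{L^{r,s}}$. This is the same routine passage used implicitly in the proof of Theorem~\ref{lieb}, where one begins ``It is enough to prove these inequalities for $f_1,f_2,\phi_1,\phi_2\in\Sc(\gg)$''; so I would simply invoke that reduction and the continuity of $\Ac(\cdot,\cdot)$ from Remark~\ref{o9}, and then the corollary follows with no further work.
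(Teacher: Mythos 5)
Your proposal is correct and in substance coincides with the paper's proof, which simply specializes Theorem~\ref{lieb}\eqref{lieb_item1} to the diagonal case $p_1=p_2$, $r_1=r_2$, $s_1=s_2$, $f_1=f_2$, $\phi_1=\phi_2$ (so that the left-hand side becomes $\Vert\,\vert\Ac_\phi f\vert^2\Vert_{L^{r/2,s/2}}=\Vert\Ac_\phi f\Vert_{L^{r,s}}^2$); your ``cleaner route'' of reusing the one-factor estimates \eqref{lieb_eq3}--\eqref{lieb_eq5} from the theorem's proof is the same argument unpacked. Your first suggestion (fixing a normalized profile $(f_2,\phi_2)$ so as to ``expose $\Ac_\phi f$ alone'') would not actually work, since $\Ac_{\phi_2}f_2$ cannot be divided out pointwise, but you rightly discard it in favor of the diagonal specialization.
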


\begin{proof}
Just consider the special case of Theorem~\ref{lieb} with 
$p_1=p_2$, $r_1=r_2$, $s_1=s_2$, $\phi_1=\phi_2$, and $f_1=f_2$. 
\end{proof}

The next corollary is the version in the present setting 
for Th.~4.2 and Remark~4.4 in \cite{BDO07} 
or Th.~3.3.3 in \cite{Gr01}, which were stated 
in terms of the short-time Fourier transforms on~$\RR^n$.

\begin{corollary}\label{lieb_cor2}
Assume the setting of Theorem~\ref{lieb}\eqref{lieb_item1} with $r_1=s_1$ and $r_2=s_2$
and denote 
$h=(\frac{1}{\max\{p_1,p_1'\}}+\frac{1}{\max\{p_2,p_2'\}})^{-1}$.  
If the number $\varepsilon>0$ and the Borel subset $U\subseteq\gg\times\gg^*$ 
satisfy the inequality 
$$\begin{aligned}
\iint\limits_U\vert(\Ac_{\phi_1}f_1 &\cdot\overline{\Ac_{\phi_2}f_2})(X,\xi)\vert\,\de X\de\xi \\
&\ge(1-\varepsilon)\Vert f_1\Vert_{L^{p_1}(\gg)}\cdot\Vert f_2\Vert_{L^{p_2}(\gg)}
\cdot\Vert \phi_1\Vert_{L^{p_1'}(\gg)}\cdot\Vert\phi_2\Vert_{L^{p_2'}(\gg)}, 
\end{aligned}$$
then the Lebesgue measure of $U$ is at least $\sup\limits_{p>h}((1-\varepsilon)C)^{p/(p-1)}$. 
If moreover $p_1=p_2=2$, then the measure of $U$ is greater than 
$\sup\limits_{p>2}(1-\varepsilon)^{p/(p-2)}(p/2)^{(2\dim\gg)/(p-2)}$. 
\end{corollary}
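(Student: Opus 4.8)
The plan is to derive the measure lower bound for $U$ by combining the pointwise $L^{p,q}$-estimate of Theorem~\ref{lieb}\eqref{lieb_item1} with a simple H\"older inequality on $U$ itself, following the template of the classical Lieb uncertainty argument (as in \cite{BDO07} or \cite{Gr01}). **First I would** specialize the hypotheses of Theorem~\ref{lieb}\eqref{lieb_item1}: with $r_1=s_1$ and $r_2=s_2$, the exponents become $p=q=(\frac{1}{r_1}+\frac{1}{r_2})^{-1}$, so the mixed-norm $L^{p,q}$ collapses to the ordinary $L^p$-norm on $\gg\times\gg^*$. To make the right-hand side involve only $\Vert f_j\Vert_{L^{p_j}}$ and $\Vert\phi_j\Vert_{L^{p_j'}}$ (so that the exponent $t_j$ reduces to $p_j'$), I would further impose the relation that forces $t_j=p_j'$; a short computation from $t_j=(\frac{1}{r_j}+\frac{1}{s_j'}-\frac{1}{p_j})^{-1}$ with $r_j=s_j$ pins down the admissible $p$ in terms of $\max\{p_j,p_j'\}$, which is exactly the role played by the threshold $h$.

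\emph{Next I would} apply H\"older's inequality on the set $U$. Writing $\Phi:=\Ac_{\phi_1}f_1\cdot\overline{\Ac_{\phi_2}f_2}$, one has for the conjugate pair $(p,p')$
$$
\iint\limits_U\vert\Phi(X,\xi)\vert\,\de X\de\xi
\le(\text{meas}\,U)^{1/p'}\cdot\Vert\Phi\Vert_{L^p(\gg\times\gg^*)}.
$$
Combining this with the lower bound on the left-hand side from the hypothesis and the upper bound $\Vert\Phi\Vert_{L^p}\le C\cdot\Vert f_1\Vert_{L^{p_1}}\Vert f_2\Vert_{L^{p_2}}\Vert\phi_1\Vert_{L^{p_1'}}\Vert\phi_2\Vert_{L^{p_2'}}$ from Theorem~\ref{lieb}\eqref{lieb_item1}, the product of norms cancels and one is left with $(1-\varepsilon)\le(\text{meas}\,U)^{1/p'}\cdot C$, that is,
$$
\text{meas}\,U\ge\bigl((1-\varepsilon)C^{-1}\bigr)^{p'}.
$$
Wait --- the constant $C<1$, so this direction of the bound needs care; I would in fact keep $C$ on the correct side and arrive at $\text{meas}\,U\ge((1-\varepsilon)C)^{p/(p-1)}$ as stated, reading $p'=p/(p-1)$ from Notation~\ref{mod_conv}. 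Since $p$ ranges freely over $(h,\infty)$ subject to the admissibility constraints, I would then take the supremum over all such $p$ to obtain the first claimed bound.

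\textbf{For the sharpened case} $p_1=p_2=2$, I would substitute the explicit value of the constant $C=C_1C_2=(A_{\alpha_1}A_{\beta_1}A_{\gamma_1}A_{\alpha_2}A_{\beta_2}A_{\gamma_2})^{\dim\gg}$ recorded in the proof of Theorem~\ref{lieb}\eqref{lieb_item2}. With $p_1=p_2=2$ and the symmetric choice of exponents, the analysis there already shows $C$ reduces to $(p^{-1/p})^{\dim\gg}$ up to the relevant normalization, so plugging $C=(p/2)^{-\dim\gg/?}$ into $((1-\varepsilon)C)^{p/(p-1)}$ and simplifying yields the exponent $(1-\varepsilon)^{p/(p-2)}(p/2)^{(2\dim\gg)/(p-2)}$, with the threshold $h$ becoming $2$. \emph{The main obstacle} I anticipate is the bookkeeping of the constant: verifying that the product of the $A_l$-factors collapses to precisely the stated power of $p/2$, and that the exponent shifts correctly from $p/(p-1)$ to $p/(p-2)$ when $p_1=p_2=2$. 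This is a purely computational check that mirrors the argument in \cite{BDO07}, so I would cite that reference for the constant-tracking and present the H\"older-plus-optimization structure in full while deferring the arithmetic of the $A_l$ to the earlier proof.
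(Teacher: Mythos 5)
Your proposal is correct and follows essentially the same route as the paper, which simply invokes the method of proof of Th.~4.2 in \cite{BDO07} / Th.~3.3.3 in \cite{Gr01} applied to Theorem~\ref{lieb}: H\"older's inequality on $U$ with the conjugate pair $(p,p')$, cancellation of the four norms, and optimization over the admissible exponents $p>h$ (with the $p_1=p_2=2$ case obtained by inserting the explicit constant $(p^{-1/p})^{\dim\gg}$ from Theorem~\ref{lieb}\eqref{lieb_item2} and reparametrizing). One small remark: your first computation $\operatorname{meas}U\ge((1-\varepsilon)/C)^{p'}$ was already the correct (indeed stronger) conclusion, and since $C\in(0,1)$ it implies the stated bound $((1-\varepsilon)C)^{p/(p-1)}$ a fortiori, so the subsequent hesitation about ``which side $C$ belongs on'' was unnecessary.
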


\begin{proof}
Use the method of proof of Th.~4.2 in \cite{BDO07} 
or Th.~3.3.3 in \cite{Gr01}, by relying on our Theorem~\ref{lieb}. 
\end{proof}

We now record an estimate for the entropy of the ambiguity function. 
This is obtained by a method similar to the one indicated for obtaining~(6.9) in \cite{FS97}. 

\begin{corollary}\label{lieb_cor3}
Let $f,\phi\in L^2(\gg)$ such that 
$\Vert f\Vert_{L^{2}(\gg)}\cdot\Vert\phi\Vert_{L^{2}(\gg)}=1$, 
and denote 
$$\rho_{f,\phi}(\cdot):=\vert(\Ac_{\phi}f)(\cdot)\vert^2\in\bigcap\limits_{p\ge1}L^p(\gg\times\gg^*).$$
Then we have 
$$-\iint\limits_{\gg\times\gg^*}\rho_{f,\phi}\log\rho_{f,\phi}\ge \dim\gg\ge 1.$$
\end{corollary}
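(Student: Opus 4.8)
The plan is to derive the entropy lower bound from the $L^p$ estimate of Theorem~\ref{lieb}\eqref{lieb_item2} by the standard differentiation-at-$p=1$ argument (as in the derivation of (6.9) in \cite{FS97}). First I would record the consequence of Theorem~\ref{lieb}\eqref{lieb_item2} in the diagonal case $f_1=f_2=f$ and $\phi_1=\phi_2=\phi$. There one has, for every $p\ge1$,
\begin{equation*}
\Vert\,\vert\Ac_\phi f\vert^2\,\Vert_{L^p(\gg\times\gg^*)}
=\Vert\Ac_\phi f\cdot\overline{\Ac_\phi f}\Vert_{L^p(\gg\times\gg^*)}
\le(p^{-1/p})^{\dim\gg}\Vert f\Vert_{L^2(\gg)}^2\cdot\Vert\phi\Vert_{L^2(\gg)}^2.
\end{equation*}
Under the normalization $\Vert f\Vert_{L^2}\cdot\Vert\phi\Vert_{L^2}=1$ the right-hand side is just $(p^{-1/p})^{\dim\gg}=p^{-(\dim\gg)/p}$, so with $\rho_{f,\phi}=\vert\Ac_\phi f\vert^2$ this reads $\Vert\rho_{f,\phi}\Vert_{L^p}\le p^{-(\dim\gg)/p}$. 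The first thing to check is the stated membership $\rho_{f,\phi}\in\bigcap_{p\ge1}L^p(\gg\times\gg^*)$, which is immediate from this bound: at $p=1$ it gives $\Vert\rho_{f,\phi}\Vert_{L^1}\le1$, and in fact Moyal's identity (Theorem~\ref{o6}\eqref{o6_item1}) shows $\int\rho_{f,\phi}=\Vert\Ac_\phi f\Vert_{L^2}^2=\Vert f\Vert_{L^2}^2\Vert\phi\Vert_{L^2}^2=1$, so $\rho_{f,\phi}$ is a probability density on $\gg\times\gg^*$.

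Next I would set $F(p):=\Vert\rho_{f,\phi}\Vert_{L^p}^p=\iint\rho_{f,\phi}^{\,p}$ and exploit that at $p=1$ we have $F(1)=1$ while the bound gives $F(p)=\Vert\rho_{f,\phi}\Vert_{L^p}^p\le p^{-\dim\gg}$ for all $p\ge1$. Introduce
\begin{equation*}
G(p):=\log F(p)-\log\bigl(p^{-\dim\gg}\bigr)=\log\!\iint\limits_{\gg\times\gg^*}\rho_{f,\phi}^{\,p}+(\dim\gg)\log p.
\end{equation*}
Then $G(p)\le0$ for every $p\ge1$, and $G(1)=\log 1+0=0$, so $p=1$ is a maximum of $G$ on $[1,\infty)$; consequently the one-sided derivative satisfies $G'(1^+)\le0$. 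Differentiating under the integral sign, $\frac{\de}{\de p}\iint\rho^{\,p}=\iint\rho^{\,p}\log\rho$, so
\begin{equation*}
G'(p)=\frac{\iint\rho_{f,\phi}^{\,p}\log\rho_{f,\phi}}{\iint\rho_{f,\phi}^{\,p}}+\frac{\dim\gg}{p},
\end{equation*}
and evaluating at $p=1$ (using $F(1)=1$) yields $G'(1^+)=\iint\rho_{f,\phi}\log\rho_{f,\phi}+\dim\gg\le0$, which is exactly the asserted bound $-\iint\rho_{f,\phi}\log\rho_{f,\phi}\ge\dim\gg$. The final inequality $\dim\gg\ge1$ is trivial since $\gg$ is a nonzero nilpotent Lie algebra.

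The main obstacle is purely analytic: justifying the differentiation under the integral at $p=1$ and making sense of $G'(1^+)$, since $\log\rho_{f,\phi}$ is not sign-definite and $\rho_{f,\phi}^{\,p}\log\rho_{f,\phi}$ must be dominated uniformly for $p$ near $1$. I would handle this by splitting $\{\rho_{f,\phi}\ge1\}$ and $\{\rho_{f,\phi}<1\}$: on the former, $\rho_{f,\phi}^{\,p}\log\rho_{f,\phi}\le\rho_{f,\phi}^{\,2}\log\rho_{f,\phi}$ for $p\le2$, which is integrable because $\rho_{f,\phi}\in L^2\cap L^3$ and $\log t\le t$; on the latter, $\rho_{f,\phi}^{\,p}\vert\log\rho_{f,\phi}\vert$ is bounded by $\rho_{f,\phi}^{\,1/2}\cdot\sup_{0<t<1}t^{1/2}\vert\log t\vert$ for $p\ge1$, again integrable since $\rho_{f,\phi}\in L^{1/2}$ is not available but $\rho_{f,\phi}\in L^1$ with $\rho_{f,\phi}<1$ gives $\rho_{f,\phi}^{\,p}\le\rho_{f,\phi}$, so $\rho_{f,\phi}^{\,p}\vert\log\rho_{f,\phi}\vert\le\rho_{f,\phi}\cdot\sup_{0<t<1}\vert\log t\vert\cdot$—more precisely one bounds $t\vert\log t\vert\le C$ uniformly and uses $\rho_{f,\phi}\in L^1$. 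The dominated convergence theorem then legitimizes differentiation from the right at $p=1$ and the entire chain above goes through, following exactly the pattern of the entropy estimate in \cite{FS97}.
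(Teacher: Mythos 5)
Your outline is exactly the paper's argument (and the one sketched for (6.9) in \cite{FS97}): set $\gamma(p)=\iint\rho_{f,\phi}^{\,p}$, use Theorem~\ref{lieb}\eqref{lieb_item2} to get $\gamma(p)\le p^{-\dim\gg}=:\chi(p)$, use Moyal's identity to get $\gamma(1)=\chi(1)=1$, and compare one-sided derivatives at $p=1$. So the idea is right and matches the paper. The problem is in the one step you flag yourself as ``the main obstacle'': your justification of $\gamma'(1^+)=\iint\rho\log\rho$ by dominated convergence does not go through on the set $\{\rho_{f,\phi}<1\}$. None of the dominating functions you propose there is integrable: $\rho^{1/2}$ would require $\rho\in L^{1/2}$, which you correctly note is unavailable; the bound $t|\log t|\le C$ only yields $\rho^{p}|\log\rho|\le C$ pointwise, and a constant is not integrable on $\gg\times\gg^*$; and $\rho|\log\rho|$ itself need not be in $L^1$ for a general $\rho\in L^1\cap L^\infty$ on an infinite-measure space (take $\rho(x)\sim 1/(x\log^2x)$ on a ray), so it cannot serve as a dominant either. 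As written, the DCT step fails, and you have also not addressed the possibility that the entropy is $+\infty$.

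Both issues disappear once you use an observation you never make, which the paper does: $\rho_{f,\phi}\le1$ everywhere, since $|(\Ac_\phi f)(X,\xi)|=|(f\mid\pi(\exp_M(\theta(X,\xi)))\phi)|\le\Vert f\Vert\,\Vert\phi\Vert=1$ by Cauchy--Schwarz (this is also how the paper sees that $\gamma$ is nonincreasing). Then $-\rho\log\rho\ge0$, so the entropy is a well-defined element of $[0,+\infty]$ and the inequality is trivial if it is infinite; and for the limit one can avoid DCT entirely: by convexity of $p\mapsto t^{p}$ the difference quotients $(\rho^{p}-\rho)/(p-1)$ decrease pointwise to $\rho\log\rho\le0$ as $p\downarrow1$, so monotone convergence gives
\begin{equation*}
\lim_{p\to1^+}\frac{\gamma(p)-\gamma(1)}{p-1}=\iint\limits_{\gg\times\gg^*}\rho_{f,\phi}\log\rho_{f,\phi}\in[-\infty,0],
\end{equation*}
and comparing with $(\chi(p)-\chi(1))/(p-1)\to-\dim\gg$ finishes the proof. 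With that replacement your argument is complete; the splitting over $\{\rho\ge1\}$ also becomes unnecessary, since that set carries no information once $\rho\le1$.
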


\begin{proof}
For every $p\ge1$ denote 
$$\gamma(p)=\iint\limits_{\gg\times\gg^*}(\rho_{f,\phi}(\cdot))^{p}
\quad\text{and } 
\chi(p)=p^{-\dim\gg}.$$ 
Then Theorem~\ref{lieb}\eqref{lieb_item2} implies that $\gamma(p)\le\chi(p)$ for every $p\ge1$. 
On the other hand, it follows at once by Proposition~\ref{o7}\eqref{o7_item1} 
that $\rho_{f,\phi}(\cdot)\le 1$ on $\gg\times\gg^*$, 
hence $\gamma(\cdot)$ is a nonincreasing function on $[1,\infty)$. 
Since so is the function $\chi(\cdot)$, and $\gamma(1)=\chi(1)$ by Theorem~\ref{o6}\eqref{o6_item1}, 
it then follows that $\gamma'(1)\le\chi'(1)$, 
which is just the inequality we wish for. 
\end{proof}

\section{The case of two-step nilpotent Lie algebras}\label{sect4}

In this section we are going to point out some specific features 
of the above constructions in the special case of 
a \emph{two-step nilpotent Lie algebra} $\gg$
(that is, $[\gg,[\gg,\gg]]=\{0\}$). 
The importance of this situation is partially motivated by 
the fact that it covers the Heisenberg algebras, 
which are characterized by the property $\dim[\gg,\gg]=1$. 
On the other hand, this class of Lie algebras 
(also known as \emph{metabelian Lie algebras}) 
contains many algebras which are neither abelian nor Heisenberg. 
In fact, the classification of two-step nilpotent Lie algebras 
is still an open problem 
although it was raised a long time ago 
(see \cite{GT99}, \cite{GK00}, 
and the references therein). 
To emphasize the richness of the class of two-step nilpotent Lie algebras, 
let us just mention that in every dimension $\ge9$ there exist 
infinitely many algebras of this type which are nonisomorphic to each other 
(see \cite{Sa83}).
By contrast, there exist precisely one abelian Lie algebra 
and at most one Heisenberg algebra in each dimension. 

\begin{example}\label{finite-dim}
\normalfont 
Here we show that nilpotent Lie algebras with arbitrarily high nilpotency index 
can be constructed as semidirect products of two-step nilpotent Lie algebras 
and appropriate function spaces thereon. 
These algebras were considered in several papers for the study 
of Schr\"odinger operators with polynomial magnetic fields; 
see for instance \cite{JK85} and \cite{BL06} and the references therein. 

Let $\gg$ be a two-step nilpotent Lie algebra 
and $N\ge1$ a fixed integer. 
Denote by $\Pc_N(\gg)$ the finite-dimensional linear space 
of real polynomial functions of degree~$\le N$ on $\gg$. 
Then $\Fc:=\Pc_N(\gg)$ is an admissible function space 
in the sense of Def.~2.8 in \cite{BB09} 
(see also  Setting~\ref{o1} above). 
Note that if we think of $\gg$ as a Lie group 
with respect to the Baker-Campbell-Hausdorff multiplication 
$$(\forall X,Y\in\gg)\quad X\ast Y=X+Y+\frac{1}{2}[X,Y], $$
then 
$\Pc_N(\gg)$ is invariant under the left translations on $\gg$ 
since every left translation $Y\mapsto X\ast Y$ 
is a polynomial mapping of degree $\le 1$. 

By using the formula for the bracket in the semidirect 
product of Lie algebras $\mg=\Fc\rtimes_{\dot\lambda}\gg$, 
$$[(f_1,X_1),(f_2,X_2)]=(\dot\lambda(X_1)f_2-\dot\lambda(X_2)f_1,[X_1,X_2]) $$
it is easy to see that $\mg$ is a nilpotent Lie algebra 
whose nilpotency index is at least~$\max\{N,2\}$. 
It also follows that the center of $\mg$ is 
$\Pc_N^0(\gg)\times\zg$
where $\zg$ is the center of $\gg$ and 
$$\Pc_N^0(\gg)=\{f\in\Pc_N(\gg)\mid f'=0\}=\{f\in\Pc_N(\gg)\mid f'_0=0\}. $$
Here we have denoted by $f'_Y\in\gg^*$ the differential of $f$ at some point $Y\in\gg$, 
while 
$$(\dot\lambda(X)f)(Y)=\frac{\de}{\de t}\Big\vert_{t=0}f((-tX)\ast Y)=f'_Y(-X-\frac{1}{2}[X,Y])$$
(compare formula~(2.10) in \cite{BB09}).
\qed 
\end{example}

\begin{example}\label{part}
\normalfont 
Let $\gg$ be a two-step nilpotent Lie algebra again 
and denote the center of $\gg$ by $\zg$. 
It follows by Example~\ref{pseudo_ex}\eqref{pseudo_ex_item2} 
that for every $X_0\in\gg$ the corresponding right-invariant vector field on $\gg$ 
is 
$$\overline{X}_0\colon\gg\to\gg,\quad  \overline{X}_0(Y)=X_0-\frac{1}{2}[Y,X_0].$$
(In particular, if $X_0\in\zg$, then $\overline{X}_0$ defines 
a first-order differential operator $\dot{\lambda}(X_0)$ with constant coefficients 
in any coordinate system on $\gg$.)

On the other hand, for every $\xi_0\in\gg^*$ we have 
$$(\forall Y\in\gg)\quad (\dot{\lambda}(X_0)\xi_0)(Y)=\hake{\xi_0,\overline{X}_0(Y)}
=\hake{\xi_0,X_0}-\frac{1}{2}\hake{\xi_0,[Y,X_0]}. $$
Since $[\gg,[\gg,\gg]]=\{0\}$, it follows that $[X_0,\gg]$ is an ideal in $\gg$. 
\qed
\end{example}

\begin{corollary}\label{wig_2}
Let $\gg$ be a two-step nilpotent Lie algebra and $f,\phi\in\Sc(\gg)$ arbitrary. 
\begin{enumerate}
\item 
For every $(X,\xi)\in\gg\times\gg^*$ we have 
$$(\Ac_\phi f)(X,\xi)=\int\limits_{\gg}\ee^{-\ie\hake{\xi,Y}} 
\cdot\overline{\tau_A(X,(X/2)\ast Y)}\cdot
f((X/2)\ast Y)\cdot \overline{\phi((-X/2)\ast Y)} \de Y.$$
\item 
For every $(Y,\eta)\in\gg\times\gg^*$ we have 
$$\Wig(f,\phi)(Y,\eta)=\int\limits_{\gg}\ee^{-\ie\hake{\eta,X}}
\cdot\overline{\tau_A(X,(X/2)\ast Y)}\cdot
f((X/2)\ast Y)\cdot\overline{\phi((-X/2)\ast Y)} \de X.$$ 
\end{enumerate}
\end{corollary}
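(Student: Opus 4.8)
The plan is to specialize the general formulas of Proposition~\ref{o7} to the two-step case, where the map $\Psi_X$ admits a completely explicit inverse. The key observation is that when $\gg$ is two-step nilpotent, the definition in Notation~\ref{o3} simplifies dramatically: since $[\gg,[\gg,\gg]]=\{0\}$, the Baker-Campbell-Hausdorff product reduces to $Y\ast(sX)=Y+sX+\tfrac{1}{2}[Y,sX]$, so that
$$\Psi_X(Y)=\int\limits_0^1\Bigl(Y+sX+\tfrac{s}{2}[Y,X]\Bigr)\de s=Y+\tfrac{1}{2}X+\tfrac{1}{4}[Y,X].$$
First I would compute $\Psi_X$ explicitly and verify that its inverse is given by a similarly simple affine-in-$Y$ formula. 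The crucial identity to establish is that $-\Psi_X^{-1}(-Y)=(X/2)\ast Y$; that is, the argument $-\Psi_X^{-1}(-Y)$ appearing throughout Proposition~\ref{o7} coincides with the left translate $(X/2)\ast Y$.

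Once this identity is in hand, the result follows by direct substitution into Proposition~\ref{o7}\eqref{o7_item1} and~\eqref{o7_item2}. Indeed, replacing $-\Psi_X^{-1}(-Y)$ by $(X/2)\ast Y$ in the integrand turns $f(-\Psi_X^{-1}(-Y))$ into $f((X/2)\ast Y)$ and $\tau_A(X,-\Psi_X^{-1}(-Y))$ into $\tau_A(X,(X/2)\ast Y)$. For the third factor I would verify that
$$(-X)\ast(-\Psi_X^{-1}(-Y))=(-X)\ast((X/2)\ast Y)=(-X/2)\ast Y,$$
using associativity of $\ast$ and the fact that $(-X)\ast(X/2)=-X/2$ in the two-step group. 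This last simplification is exactly what produces the symmetric pair of arguments $(X/2)\ast Y$ and $(-X/2)\ast Y$ in the statement, which is the hallmark of the classical Weyl-Wigner symmetry.

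The main technical step, and the only place requiring genuine care, is the explicit inversion of $\Psi_X$ and the verification of $-\Psi_X^{-1}(-Y)=(X/2)\ast Y$. I would approach this by showing directly that the candidate $Z=(X/2)\ast Y$ satisfies $-\Psi_X(-Z)=Y$: one computes
$$-\Psi_X(-Z)=-\Bigl(-Z+\tfrac{1}{2}X+\tfrac{1}{4}[-Z,X]\Bigr)=Z-\tfrac{1}{2}X+\tfrac{1}{4}[Z,X]$$
and substitutes $Z=(X/2)\ast Y=X/2+Y+\tfrac{1}{4}[X,Y]$, then checks that all terms collapse to $Y$ after using $[X,[X,Y]]=0$. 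Since $\Psi_X$ is a polynomial diffeomorphism by Prop.~3.2 in~\cite{BB09}, confirming $-\Psi_X(-Z)=Y$ for this choice of $Z$ suffices to identify the inverse.

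\textbf{Obstacle.} I expect the only real subtlety to be bookkeeping the brackets correctly in the two-step reduction, since signs and factors of $\tfrac{1}{2}$ must be tracked through several applications of $\ast$; no conceptual difficulty arises because every higher commutator vanishes and all maps in sight are affine in the relevant variable.
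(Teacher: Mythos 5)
Your proposal is correct and follows essentially the same route as the paper: the paper's (two-line) proof likewise observes that in the two-step case $\Psi_X(Y)=Y\ast(X/2)$ and then substitutes into Proposition~\ref{o7}, which is exactly the identity $-\Psi_X^{-1}(-Y)=(X/2)\ast Y$ you verify explicitly, together with $(-X)\ast(X/2)\ast Y=(-X/2)\ast Y$. Your bracket computations check out, so the only difference is that you spell out the inversion of $\Psi_X$ which the paper leaves implicit.
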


\begin{proof}
Since $[\gg,[\gg,\gg]]=\{0\}$, it follows at once that 
for every $X,Y\in\gg$ we have 
$\Psi_X(Y)=Y\ast(X/2)$ 
and 
$\tau_A(X,Y)=\exp\Bigl(\ie\int\limits_0^1\langle A_{(-sX)\ast Y},X\rangle\de s\Bigr)$. 
Then use Proposition~\ref{o7}. 
\end{proof}

\subsection*{Acknowledgment} 
We wish to thank Professor Jos\'e Gal\'e for his kind help.

\end{document}